\newcommand{\jpa}{J. Phys. A~}
\newcommand{\natphy}{Nature Phys.~}
\newcommand{\prl}{Phys. Rev. Lett.~}
\newcommand{\pra}{Phys. Rev. A~}
\newcommand{\pla}{Phys. Lett. A~}
\definecolor{myurlcolor}{rgb}{0,0,0.7}
\newcommand{\blue}{\textcolor{blue}}
\newcommand{\tinyspace}{\mspace{1mu}}
\newcommand{\abs}[1]{\left\lvert\tinyspace #1 \tinyspace\right\rvert}
\renewcommand{\det}{\operatorname{det}}
\newcommand{\setft}[1]{\mathrm{#1}}
\newcommand{\density}[1]{\setft{D}\left(#1\right)}
\def \dif {\mathrm{d}}
\def\complex{\mathbb{C}}
\def\real{\mathbb{R}}
\newenvironment{mylist}[1]{\begin{list}{}{
    \setlength{\leftmargin}{#1}
    \setlength{\rightmargin}{0mm}
    \setlength{\labelsep}{2mm}
    \setlength{\labelwidth}{8mm}
    \setlength{\itemsep}{0mm}}}
    {\end{list}}
\def\ot{\otimes}
\newcommand{\out}[2]{| #1\rangle\langle #2 |}
\newcommand{\Pa}[1]{\left(#1\right)}
\newcommand{\Br}[1]{\left[#1\right]}
\newcommand{\set}[1]{\{#1\}}
\newcommand{\Set}[1]{\left\{#1\right\}}
\newcommand{\ket}[1]{|#1\rangle}
\DeclareMathOperator{\trace}{Tr}
\newcommand{\Ptr}[2]{\trace_{#1}\Pa{#2}}
\newcommand{\Tr}[1]{\Ptr{}{#1}}
\newcommand{\Abs}[1]{\left|\tinyspace#1\tinyspace\right|}
\def\cA{\mathcal{A}}
\def\cN{\mathcal{N}}
\def\cV{\mathcal{V}}
\def\bsU{\boldsymbol{U}}
\def\bsp{\boldsymbol{p}}
\newtheorem{thrm}{Theorem}[section]
\newtheorem{lem}[thrm]{Lemma}
\newtheorem{prop}[thrm]{Proposition}
\theoremstyle{definition}
\newtheorem{remark}[thrm]{Remark}
\newtheorem{exam}[thrm]{Example}
\numberwithin{equation}{section}
\newcounter{questionnumber}
\begin{document}

\title{\Large \bf Visualization of all two-qubit states via partial-transpose-moments}

\author{\blue{Lin Zhang}$^1$\footnote{godyalin@163.com},\quad \blue{Yi Shen}$^2$\footnote{yishen@jiangnan.edu.cn (corresponding author)},\quad \blue{Hua Xiang}$^3$,\quad \blue{Quan Qian}$^1$,\quad \blue{Bo Li}$^1$\\
  {\it\small $^1$School of Science, Hangzhou Dianzi University, Hangzhou 310018, PR~China}\\
  {\it\small $^2$School of Science, Jiangnan University, Wuxi 214122, PR~China} \\
  {\it\small $^3$School of Mathematics and Statistics, Wuhan University, Wuhan 430072, PR~China}}
\date{\today}
\maketitle

\begin{abstract}
Efficiently detecting entanglement based on measurable quantities is
a basic problem for quantum information processing. Recently, the
measurable quantities called partial-transpose (PT)-moments have
been proposed to detect and characterize entanglement. In the
recently published paper [L. Zhang \emph{et al.},
\href{https://doi.org/10.1002/andp.202200289}{Ann. Phys.(Berlin)
\textbf{534}, 2200289 (2022)}], we have already identified the
2-dimensional (2D) region, comprised of the second and third
PT-moments, corresponding to two-qubit entangled states, and
described the whole region for all two-qubit states. In the present
paper, we visualize the 3D region corresponding to all two-qubit
states by further involving the fourth PT-moment (the last one for two-qubit
states). The characterization of this 3D region can finally be
achieved by optimizing some polynomials. Furthermore, we identify
the dividing surface which separates the two parts of the whole 3D region
corresponding to entangled and separable states respectively. Due to
the measurability of PT-moments, we obtain a complete and operational criterion
for the detection of two-qubit entanglement.
\end{abstract}


\section{Introduction}
\label{sec:intro}

Quantum entanglement has been regarded as an indispensable resource
in various information processing tasks. Nevertheless, the detection
and characterization of entanglement is not easy, especially for
high-dimensional and multipartite systems. A well-known fact is that
to determine a bipartite state is separable or entangled is an
NP-hard problem \cite{Gurvits2003,Gurvits2004,Gharibian2008}. For
this notorious difficulty, the low-dimensional systems, especially
the two-qubit system, are still the main setting in quantum
experiments \cite{Zeilinger1997}. Fortunately, the celebrated PPT
(positive-partial-transpose) criterion is sufficient and necessary
for detecting bipartite entanglement in two-qubit and qubit-qutrit
systems \cite{Peres1996,Horodecki1996}. In this paper, we focus on
two-qubit states, and characterize both the set of entangled states
and that of separable states from a graphical point of view which offers an effective tool to understand entanglement
\cite{Bengtsson2017}.

The PPT criterion is actually a mathematical tool commonly used in
the theoretical scenario because the partial transposition is not a
physical operation. Theoretically, for a given bipartite state
$\rho_{AB}$ of system $AB$, detecting its entanglement by PPT
criterion is straightforward to check whether the partial transpose
of $\rho_{AB}$, denoted by $\rho_{AB}^{\Gamma}$, is still positive
semidefinite. Nevertheless, in the practical scenario, the detection
of entanglement is not so direct. On the one hand, the partial
transposition of an operator cannot be implemented directly in
experiments. On the other hand, in actual experiments, the quantum
state is unknown, unless the resource-inefficient quantum state
tomography is performed \cite{Yu2021}. More generally, the
realizations of effective entanglement criteria usually consume
exponential resources, and efficient criteria often perform poorly
without prior knowledge \cite{Liu2022aug}. Therefore, it is more
valuable to develop efficient methods of detecting entanglement
based on measurable quantities from the perspective of resource
conservation.

The so-called \emph{partial transpose moments} (PT-moments)
introduced in \cite{Yu2021} has recently been recognized as the good
measurable quantities to efficiently detect entanglement. The $k$-th
PT-moment, denoted by $p_k$, of a state $\rho_{AB}$ is defined as
the sum of $k$th-power of each eigenvalue of $\rho_{AB}^{\Gamma}$
for any positive integer $k$ not greater than the global dimension
of the bipartite system $AB$. Although the PT-moments are nonlinear
functions of the state $\rho_{AB}$ depending on the spectrum of
$\rho^{\Gamma}_{AB}$, it has been shown that the PT-moments can be
efficiently measured from randomized measurements which are readily available in NISQ devices \cite{Elben2020,Huang2020}. The protocol of determining PT-moments proposed in Ref. \cite{Elben2020} is efficient for the following three facts. First, compared with previous proposals, this protocol only requires single-qubit control, and allows for the estimation of many distinct PT-moments from the same data \cite{Elben2020}. Second, compared with performing full state tomography to compute PT-moments with high accuracy, this protocol requires less measurements, and the number of required measurements decreases significantly for highly mixed states \cite{Elben2020}. Third, the PT-moments are predicted by classical postprocessing of the outcomes from very few random measurements specifically designed in Ref. \cite{Huang2020}. It leads to the data processing in this protocol is cheap - both in memory and runtime - and can be massively parallelized \cite{Elben2020}.
In addition, it is known from
\cite{Mac1995,Duan2022} that the spectrum of $\rho_{AB}^{\Gamma}$
can be completely determined with all known PT-moments. Then the
PT-moments provide an operational method of detecting entanglement
with the help of PPT criterion. The measurability of PT-moments
bridges the practical limitations of the PPT criterion, and allows
it to be used for experimental entanglement detection. For this
advantage, the PT-moment-based entanglement detection has aroused
great interest and been widely investigated recently
\cite{Liu2022,TGZhang2022,Wang2022}.

Moreover, the PT-moments also provide a characterization of states.
It is natural to ask whether there is a separable or entangled state
compatible with the given data of PT-moments. Due to the difficulty
of measuring all the PT-moments, the authors in \cite{Yu2021}
characterized the separable states supported on $\complex^d$ using
the first three PT-moments, where the first PT-moment is normalized
as one. Specifically, it is known from \cite{Yu2021} that there is a
separable state compatible with the given pair of PT-moments
$(p_2,p_3)$ if and only if $p_3$ is lower and upper bounded by two
functions of variable $p_2\in[\frac 1d,1]$. In \cite{Zhang2022} we
considered the similar question corresponding to two-qubit entangled
states, i.e. whether there is an entangled two-qubit state
compatible with the given pair $(p_2,p_3)$. We answered the above
question by determining the accurate lower bound function for $p_3$
of variable $p_2\in[\frac 13, 1]$ \cite{Zhang2022}. Thus, combining
with the results corresponding to two-qubit separable states, we
have already known the 2D feasible region of $(p_2,p_3)$
corresponding to all two-qubit states and the dividing curve between
the two regions corresponding to separable and entangled states
respectively.

The 2D region of $(p_2,p_3)$ for all two-qubit states is lack of the
information of $p_4$, and thus it is incomplete and just a
projection of the whole 3D region of $(p_2,p_3,p_4)$. In order to
describe the complete graph of all two-qubit states, we further
study the relation between $p_4$ and the pair $(p_2,p_3)$ in this
paper. The complete graph of all two-qubit states is a 3D region
in the $(p_2,p_3,p_4)$-coordinate system, where the pair $(p_2,p_3)$
is restricted in the 2D feasible region above-mentioned
\cite{Zhang2022}. Then the characterization of the complete graph
of all two-qubit states can be mathematically formulated as the
problem of bounding $p_4$ with binary functions in terms of two
variables $p_2$ and $p_3$, where the pair $(p_2,p_3)$ is restricted
in the feasible region. By virtue of the linear relation between
$p_4$ and the determinant of $\rho_{AB}^{\Gamma}$ for the fixed pair
$(p_2,p_3)$, we equivalently transform the original problem
to an optimization problem with respect to a variable $s$ which is a
sum of the largest three eigenvalues of $\rho_{AB}^{\Gamma}$. By
figuring out this optimization problem we finally determine the
lower bound function $F^-(p_2,p_3)$ and upper bound function
$F^+(p_2,p_3)$ in Theorem \ref{thm:main}, such that
$F^-(p_2,p_3)\leqslant p_4\leqslant F^+(p_2,p_3)$ for any fixed pair
$(p_2,p_3)$ in the feasible region. Furthermore, in Proposition
\ref{prop:ent} we identify the border surface which separates the
two regions corresponding to entangled and separable states
respectively. Due to the measurability of PT-moments, we obtain an
operational criterion for the detection of two-qubit entanglement.
In order to make the process of deriving $F^+(p_2,p_3)$ and
$F^-(p_2,p_3)$ clearer, we summarize the calculating procedure of
the maximum and minimum values of $p_4$ for a given pair $(p_2,p_3)$
in the feasible region. As examples we plot the graph of the
family of Werner states and the family of Bell-diagonal states in
Figure \ref{fig:werner3d} and Figure \ref{fig:bell3d} respectively,
by following the above calculating procedure. Generally, we further visualize
the 3D region corresponding to the set of all two-qubit states in
Figure \ref{fig:2qubitsepvsent}, and mark the separable region and
entangled region in different colors. Finally, we compare the criterion in terms of the triple $(p_2,p_3,p_4)$ with two common entanglement measures, i.e. the negativity and the concurrence, and consequently conclude that the criterion proposed in this paper has the same performance as such two entanglement measures in the aspect of detecting two-qubit entanglement.

The paper is organized as follows. In Sect.~\ref{sec:pre}, we
mathematically formulate the expression of each PT-moment, recall
the feasible region of the pair $(p_2,p_3)$ specifically, and
clarify some essential notations frequently used to derive main
results. In Sect.~\ref{sec:mre}, we specifically explain the process
of deriving the lower and upper bound functions for $p_4$ step by
step. Some derivation details are put in appendices for simplicity. In Sect. \ref{sec:exam}, we summarize the calculating
procedure of the maximum and minimum values of $p_4$ when inputting
a given pair $(p_2,p_3)$. We specifically study two families of
states by following the calculating procedure in this section. In
Sect.~\ref{sec:fig}, we present the visualization of
all two-qubit states via PT-moments. In Sect. \ref{sec:com}, we compare our proposed criterion via PT-moments with two common entanglement measures on the entanglement detection.
Finally, we conclude in Sect.~\ref{sec:con}.

\section{Preliminary}\label{sec:pre}

For any bipartite state $\rho_{AB}$ supported on the Hilbert space
${\cal H}_{AB}\cong{\mathbb C}^{d_A}\otimes\mathbb C^{d_B}$, its
PT-moments \cite{Yu2021} are defined in terms of the spectrum of its
partial transpose $\rho_{AB}^\Gamma$ with respect to one of both
subsystems as:
\begin{eqnarray}\label{eq:def}
p_k=\Tr{\Br{\rho^\Gamma_{AB}}^k} =\sum_{j=1}^{d_Ad_B}
x_j^k\quad(1\leqslant k\leqslant d_Ad_B)
\end{eqnarray}
where $x_j$'s are the eigenvalues of $\rho_{AB}^\Gamma$ sorted in
the descending order. We call $\bsp^{(d)}=(p_1,\cdots,p_d)$ the
PT-moment vector where $d=d_Ad_B$, and always assume $p_1=1$ for
convenience. It follows from \cite{Mac1995} that the spectrum of
$\rho_{AB}^{\Gamma}$ can be completely determined with all known
PT-moments, for which whether $\rho_{AB}$ is an NPT state can be
verified.

In this paper, we focus on the two-qubit system. According to the
above description, there are four PT-moments in total for any
two-qubit state, i.e. $p_1,p_2,p_3,p_4$ defined by Eq.
\eqref{eq:def}. For two-qubit states, we have already identified the
2D region comprised of a pair of PT-moments namely $(p_2,p_3)$ in
our recent paper \cite{Zhang2022}. In order to obtain a more
comprehensive picture of all two-qubit states by PT-moments, we
further introduce the fourth PT-moment $p_4$ to characterize the
complete graph of all two-qubit states. Specifically, we shall identify
the 3D regions corresponding to the set of entangled states and that
of separable states respectively in terms of a triple of PT-moments
namely $(p_2,p_3,p_4)$. First, we recall the relation between $p_2$
and $p_3$ derived in \cite{Zhang2022}, where we bounded $p_3$ with
functions of variable $p_2$ as follows. It is known from
\cite{Zhang2022} that
\begin{eqnarray}
\label{eq:p2-p3}
\frac14\leqslant p_2\leqslant1,\quad f^-(p_2)\leqslant p_3\leqslant
f^+(p_2),
\end{eqnarray}
where
\begin{eqnarray}
\label{eq:p2-p3-f}
f^\pm(p_2)=\frac{3(6p_2-1)\pm\sqrt{3}(4p_2-1)^\frac32}{24}.
\end{eqnarray}
Denote by
\begin{eqnarray}\label{eq:A}
\cA:=\Set{(p_2,p_3): \frac14\leqslant
p_2\leqslant1,~~~f^-(p_2)\leqslant p_3\leqslant f^+(p_2)}
\end{eqnarray}
the feasible region of the pair of PT-moments $(p_2,p_3)$ for all two-qubit states. Furthermore, from \cite[Fig. 1]{Zhang2022} there is a curve represented by $p_3=\phi_4(p_2)$ which divides the separable part (corresponding to separable states) and entangled part (corresponding to entangled states) of 2D region $\cA$. The function $p_3=\phi_4(p_2)$ has been specifically formulated \cite[Eq. (10)]{Zhang2022} as
\begin{equation}
  \phi_4(p_2)=
\begin{cases}
  \frac{3p_2-1}{2}, & p_2\in[\frac 12, 1] \\
  \frac{2(9p_2-2)-\sqrt{2}(3p_2-1)^{\frac 32}}{18}, & p_2\in[\frac 13, \frac 12] \\
  \frac{3(6p_2-1)-\sqrt{3}(4p_2-1)^{\frac 32}}{24}, & p_2\in[\frac 14, \frac 13]
\end{cases}.
\end{equation}
One can verify that $\phi_4(p_2)\geqslant p_2^2, ~\forall p_2\in [\frac 14, 1]$, and the equality holds only when $p_2=\frac 14, \frac 13, 1$. Hence, there are infinite pairs $(p_2,p_3)\in\cA$ satisfying $\phi_4(p_2)>p_3>p^2_2$ for $p_2\in(\frac13,1)$. Recall that if $\rho_{AB}$ is PPT, then its pair of PT-moments must satisfy $p_3\geqslant p_2^2$ \cite{Elben2020}. Comparing this criterion with the criterion we previously proposed in Ref. \cite{Zhang2022}, there are entangled states which can be detected by $\phi_4(p_2)>p_3$ but cannot be detected by $p_3>p_2^2$. It shows our previous criterion is more powerful than that in Ref. \cite{Elben2020}.
Analogous to the 2D region given by $\cA$, we continue to derive the 3D region comprised of the triple of PT-moments namely $(p_2,p_3,p_4)$ for all two-qubit states. Because the last PT-moment $p_4$ for two-qubit states is involved, the 3D region and the entanglement detection criterion in terms of $(p_2,p_3,p_4)$ should be more complete. For convenience, we introduce some essential notations which will be frequently used in the expression of our main results.

The focused problem is to bound $p_4$ with binary functions
in $p_2$ and $p_3$. According to Eq.~\eqref{eq:def}, each PT-moment
of a state is closely related to the spectrum of its partial
transpose. Denote by $\rho_{AB}$ an arbitrary two-qubit state, and
$\rho_{AB}^{\Gamma}$ is the partial transpose. There are some
constraints on the spectrum of $\rho_{AB}^{\Gamma}$. It
follows from Ref. \cite{Rana2013pra} that $\rho_{AB}$ is NPT if and only
if $\rho_{AB}^{\Gamma}$ has exact one negative value, and the
eigenvalues of $\rho_{AB}^{\Gamma}$ all belong to the interval
$[-\frac12, 1]$. Suppose $x\geqslant y\geqslant z$ are the three
nonnegative eigenvalues of $\rho_{AB}^\Gamma$, and thus the fourth
eigenvalue is $1-x-y-z$. Then for any fixed pair of PT-moments
$(p_2,p_3)\in\cA$, all of three parameters must satisfy the
following constraints:
\begin{eqnarray}
\label{eq:maincons-1}
&&1\geqslant x\geqslant y\geqslant z\geqslant 0,\\
\label{eq:maincons-2}
&&z\geqslant 1-x-y-z \geqslant -\frac12,\\
\label{eq:maincons-3}
&&x^2+y^2+z^2+(1-x-y-z)^2=p_2,\\
\label{eq:maincons-4} &&x^3+y^3+z^3+(1-x-y-z)^3=p_3.
\end{eqnarray}
Define the objective function as $s\equiv s(x,y,z):=x+y+z$. For any
fixed pair of PT-moments $(p_2,p_3)\in\cA$, denote by $\cV(p_2,p_3)$
the set of feasible triples $(x,y,z)$ meeting all the above
constraints. As we shall see in Sect.~\ref{sec:mre}, the following
two values are essential to derive our main results. For any pair
$(p_2,p_3)\in\cA$, define
\begin{eqnarray}
\label{eq:smax}
s_{\max}&\equiv& s_{\max}(p_2,p_3):=\max_{\cV(p_2,p_3)} s(x,y,z),\\
\label{eq:smin} s_{\min}&\equiv&
s_{\min}(p_2,p_3):=\min_{\cV(p_2,p_3)} s(x,y,z).
\end{eqnarray}
Without ambiguity, we may use $s_{\max}$ and $s_{\min}$ to denote
$s_{\max}(p_2,p_3)$ and $s_{\min}(p_2,p_3)$, respectively.

Although for any given pair $(p_2,p_3)$, the two values $s_{\max}$
and $s_{\min}$ can be numerically calculated, the analytical
expressions of the two functions $s_{\max}(p_2,p_3)$ and
$s_{\min}(p_2,p_3)$ given by Eqs. \eqref{eq:smax} and
\eqref{eq:smin} are quite complicated to derive. Here we made two
rough conclusions on $s_{\max}$ and $s_{\min}$. First, combining
Eqs. \eqref{eq:maincons-1} and \eqref{eq:maincons-2} leads to
$1\geqslant x\geqslant y\geqslant z\geqslant 1-s\geqslant-\frac12$.
It implies that the value $s=x+y+z$ satisfies $s\geqslant 3(1-s)$
and $1-s\geqslant-\frac12$. It is not hard to figure out
$s\in[\tfrac34,\frac32]$, and thus
$[s_{\min},s_{\max}]\subseteq[\tfrac34,\frac32]$ from Eqs.
\eqref{eq:smax} and \eqref{eq:smin}. Second, we claim that the
maximum and minimum values $s_{\max}$ and $s_{\min}$ are reached
only if all of three variables $x,y,z$ are equal. We present the
proof of this claim in Appendix \ref{sec:appsmaxmin}.

\section{Characterization of the 3D region in terms of PT-moments}
\label{sec:mre}

As we have obtained the 2D region comprised of the pair of
PT-moments $(p_2,p_3)$ \cite{Zhang2022}, in order to obtain the
whole 3D region comprised of the triple of PT-moments
$(p_2,p_3,p_4)$, we only need to bound the fourth PT-moments $p_4$
with binary functions of two variables $p_2$ and $p_3$. In this
section we derive the lower bound function $F^-(p_2,p_3)$ and upper
bound function $F^+(p_2,p_3)$ such that $F^-(p_2,p_3) \leqslant
p_4\leqslant F^+(p_2,p_3)$ for all two-qubit states. Such two
functions can be derived through the following steps.

First of all, it follows from Ref. \cite{Mac1995} that the determinant of
$\rho_{AB}^\Gamma$ has an analytical expression in terms of
$p_2,p_3,p_4$, i.e.
\begin{eqnarray}
\label{eq:det}
\det(\rho^\Gamma_{AB}) =\frac{3p^2_2 - 6p_2 + 8p_3 - 6 p_4 + 1}{4!}.
\end{eqnarray}
From the above equation we calculate $p_4$ as
\begin{eqnarray}
\label{eq:det-p4}
p_4=\frac{3 p_2^2-6 p_2+8 p_3+1}6 -4\det(\rho^\Gamma_{AB}).
\end{eqnarray}
Let $F(p_2,p_3):=\frac{3 p_2^2-6 p_2+8 p_3+1}6$. Then $p_4$ also
reads $p_4=F(p_2,p_3)-4\det(\rho^\Gamma_{AB})$ for simplicity. Since
$(p_2,p_3)\in\cA$ are fixed arbitrarily, it follows from Eq.
\eqref{eq:det-p4} that the maximum of $p_4$ is achieved if and only
if the minimum of $\det(\rho^\Gamma_{AB})$ is achieved, and
analogously the minimum of $p_4$ is achieved if and only if the
maximum of $\det(\rho^\Gamma_{AB})$ is achieved. That is, for any
fixed pair $(p_2,p_3)\in\cA$,
\begin{eqnarray}
\label{eq:F+todet-}
F^+(p_2,p_3) := \frac{3 p_2^2-6 p_2+8 p_3+1}6 -
4m(p_2,p_3)=F(p_2,p_3)-4m(p_2,p_3),\\
\label{eq:F-todet+}
F^-(p_2,p_3) := \frac{3 p_2^2-6 p_2+8 p_3+1}6 -
4M(p_2,p_3)=F(p_2,p_3)-4M(p_2,p_3),
\end{eqnarray}
where $m(p_2,p_3)$ and $M(p_2,p_3)$ are respectively the minimum and
maximum of $\det(\rho^\Gamma_{AB})$ over those two-qubit states
whose second and third PT-moments are respectively equal to the
given $p_2$ and $p_3$. Thus, in this first step, we equivalently
transform the problem of deriving $F^+(p_2,p_3)$ and $F^-(p_2,p_3)$
to the problem of bounding the determinant $\det(\rho^\Gamma_{AB})$.

The second step is to accurately bound
$\det(\rho^\Gamma_{AB})$, i.e. the lower bound $m(p_2,p_3)$ and upper bound $M(p_2,p_3)$ for any
given pair $(p_2,p_3)\in\cA$. Recall that we have assumed $x,y,z$ and $1-x-y-z$ are
the four eigenvalues of $\rho_{AB}^\Gamma$ in the descending order.
It implies $\det(\rho_{AB}^\Gamma)=xyz(1-x-y-z)$. Recall that
for any fixed pair $(p_2,p_3)\in\cA$, the triple of variables
$(x,y,z)$ satisfies the constraints given by Eqs.
\eqref{eq:maincons-1} --- \eqref{eq:maincons-4}. We have already
known that $\rho_{AB}$ is an NPT state if and only if
$\det(\rho^\Gamma_{AB})<0$ \cite{Rana2013pra,Shen2020}. It implies
that $\det(\rho^\Gamma_{AB})$ attains its maximal value on some PPT
states, i.e. separable states, and attains its minimal value on some
NPT states, i.e., entangled states. Hence, we first propose the
global maximal and minimal values of
$\det(\rho^\Gamma_{AB})$ over the set of all two-qubit states,
namely $\density{\complex^2\ot\complex^2}$.
For any fixed pair $(p_2,p_3)\in \cA$, we conclude that
\begin{eqnarray}
\label{eq:mMbound}
-\frac1{16}\leqslant m(p_2,p_3)\leqslant
\det(\rho^\Gamma_{AB})\leqslant M(p_2,p_3)\leqslant
\frac1{256}.
\end{eqnarray}
The specific calculation of the global maximal and minimal values of $\det(\rho_{AB}^\Gamma)$ is put in Appendix \ref{sec:gdet}.
Combining the above observation with Eqs. \eqref{eq:F+todet-} ---
\eqref{eq:F-todet+}, we also estimate the difference between
$F^+(p_2,p_3)$ and $F^-(p_2,p_3)$ as
\begin{eqnarray}
0\leqslant F^+(p_2,p_3) - F^-(p_2,p_3) = 4[M(p_2,p_3)-m(p_2,p_3)]
\leqslant \frac{17}{64},\quad \forall(p_2,p_3)\in\cA.
\end{eqnarray}

Next, we shall determine $m(p_2,p_3)$ and $M(p_2,p_3)$
accurately for any pair $(p_2,p_3)\in\cA$. We transform this problem to the optimization problem on the objective function $P(s|p_2,p_3)$ over the closed interval $[s_{\min},s_{\max}]$, where $s_{\max}$ and $s_{\min}$ have been clarified by Eqs. \eqref{eq:smax} and \eqref{eq:smin} respectively. That is,
\begin{eqnarray}
\label{eq:MtoP}
M(p_2,p_3)&=&\max_{s\in[s_{\min},s_{\max}]}\Phi(s|p_2,p_3) =
-\min_{s\in[s_{\min},s_{\max}]}P(s|p_2,p_3),\\
\label{eq:mtoP}
m(p_2,p_3)&=&\min_{s\in[s_{\min},s_{\max}]}\Phi(s|p_2,p_3) =
-\max_{s\in[s_{\min},s_{\max}]}P(s|p_2,p_3),
\end{eqnarray}
where
\begin{equation}
\label{eq:-phi-p2-p3}
\begin{aligned}
P(s|p_2,p_3)&:=-\Phi(s|p_2,p_3) \\
&=s^4-3s^3+\frac{7-p_2}2s^2+\frac{3p_2+2p_3-11}6s+\frac{1-p_3}3,
\end{aligned}
\end{equation}
for $s\in[s_{\min},s_{\max}]$. We put the detailed transformation process in Appendix \ref{sec:MmtoP}.

The third step is to specifically optimize $P(s|p_2,p_3)$ over
$[s_{\min},s_{\max}]$. We consider $P(s|p_2,p_3)$ as a parametric polynomial with respect to variable $s$, and employ the typical method to figure out this optimization problem. We put the specific derivation process in Appendix \ref{sec:optP}.
Based on the analytical discussion in Appendix \ref{sec:optP} we
can draw the conclusion as follows.
\begin{prop}\label{prop:Ploex}
Let
$P(s|p_2,p_3):=s^4-3s^3+\frac{7-p_2}2s^2+\frac{3p_2+2p_3-11}6s+\frac{1-p_3}3$
be a function of variable $s$, where the pair of parameters
$(p_2,p_3)$ are fixed in $\cA$. Let $r_1\geqslant r_2\geqslant r_3$
be the three roots formulated in Eq. \eqref{eq:Delta<0roots-3}, such
that the first derivative $\frac{\dif}{\dif s}P(s|p_2,p_3)$ is zero.
Then,
\begin{eqnarray}
\min_{s\in[s_{\min},s_{\max}]}P(s|p_2,p_3)&=&\min\Set{v_1,v_2,v_3,v_4},\label{eq:Plomin}\\
\max_{s\in[s_{\min},s_{\max}]}P(s|p_2,p_3)&=&\max\Set{v_1,v_2,v_3,v_4}.\label{eq:Plomax}
\end{eqnarray}
where
\begin{eqnarray}
\notag
v_1&:=&P\Pa{s_{\max}|p_2,p_3},\quad v_2:=P(\min\set{r_1,s_{\max}}|p_2,p_3),\\
\notag
v_3&:=&P\Pa{\max\set{s_{\min},r_2}|p_2,p_3},\quad
v_4:=P(s_{\min}|p_2,p_3).
\end{eqnarray}
\end{prop}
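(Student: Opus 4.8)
\medskip
\noindent\emph{Proof strategy.}\quad The plan is to treat $P(\,\cdot\,|\,p_2,p_3)$ as a smooth function on the compact interval $[s_{\min},s_{\max}]$ and to extremize it by the elementary ``endpoints together with interior stationary points'' principle; the only real work is to show that, among the stationary points of $P$, at most two can ever lie strictly inside the interval, and that $v_2$ and $v_3$ encode exactly the ``clamped'' versions of those two.

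First I would record the analytic setup. Since $P(\,\cdot\,|\,p_2,p_3)$ is a quartic polynomial it is $C^\infty$ on $\mathbb{R}$, and by Sect.~\ref{sec:pre} the interval $[s_{\min},s_{\max}]$ is nonempty and contained in $[\tfrac34,\tfrac32]$; hence, by the extreme value theorem together with Fermat's interior-extremum theorem, each of $\min_{[s_{\min},s_{\max}]}P$ and $\max_{[s_{\min},s_{\max}]}P$ is attained at $s_{\min}$, at $s_{\max}$, or at an interior zero of $\frac{\dif}{\dif s}P(s\,|\,p_2,p_3)$. Next I would differentiate, obtaining the cubic
\begin{equation*}
\tfrac{\dif}{\dif s}P(s\,|\,p_2,p_3)=4s^3-9s^2+(7-p_2)s+\tfrac{3p_2+2p_3-11}{6},
\end{equation*}
which, by the discriminant analysis of Appendix~\ref{sec:optP}, has for every $(p_2,p_3)\in\cA$ three real roots $r_1\geqslant r_2\geqslant r_3$ given by Eq.~\eqref{eq:Delta<0roots-3}. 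Since its leading coefficient is positive we get $\frac{\dif}{\dif s}P=4(s-r_1)(s-r_2)(s-r_3)$, which is $<0$ on $(-\infty,r_3)$, $>0$ on $(r_3,r_2)$, $<0$ on $(r_2,r_1)$ and $>0$ on $(r_1,\infty)$; so $P$ decreases, increases, decreases and increases over these ranges, making $r_3$ and $r_1$ local minimizers and $r_2$ the only local maximizer (coincident roots read as limiting cases).

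The crucial ingredient is then the root-location information, which I would take from Appendix~\ref{sec:optP}: for every $(p_2,p_3)\in\cA$,
\begin{equation*}
r_3\leqslant s_{\min}\leqslant r_1,\qquad r_2\leqslant s_{\max}.
\end{equation*}
Granting this, $r_3\leqslant s_{\min}$ shows $r_3$ is never an interior stationary point, so any interior stationary point of $P$ in $(s_{\min},s_{\max})$ is $r_1$ (possible only when $r_1<s_{\max}$) or $r_2$ (possible only when $r_2>s_{\min}$); and $r_1\geqslant s_{\min}$, $r_2\leqslant s_{\max}$ give $\min\{r_1,s_{\max}\}\in[s_{\min},s_{\max}]$ and $\max\{s_{\min},r_2\}\in[s_{\min},s_{\max}]$. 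Thus all four numbers $v_1=P(s_{\max})$, $v_2=P(\min\{r_1,s_{\max}\})$, $v_3=P(\max\{s_{\min},r_2\})$, $v_4=P(s_{\min})$ are values of $P$ at points of $[s_{\min},s_{\max}]$, whence $\min\{v_1,\dots,v_4\}\geqslant\min_{[s_{\min},s_{\max}]}P$ and $\max\{v_1,\dots,v_4\}\leqslant\max_{[s_{\min},s_{\max}]}P$. Conversely, by the first step the minimizer of $P$ on the interval is $s_{\min}$ (value $v_4$), $s_{\max}$ (value $v_1$), or an interior stationary point, which is $r_1$ with $r_1<s_{\max}$, hence $v_2=P(r_1)$, or $r_2$ with $r_2>s_{\min}$, hence $v_3=P(r_2)$; so $\min_{[s_{\min},s_{\max}]}P\in\{v_1,\dots,v_4\}$, and symmetrically for the maximum. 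Combining the two directions gives \eqref{eq:Plomin}--\eqref{eq:Plomax}.

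I expect the root-location step to be the genuine obstacle. It requires comparing the (trigonometric) expressions for $r_1,r_2,r_3$ in \eqref{eq:Delta<0roots-3} with $s_{\min}(p_2,p_3)$ and $s_{\max}(p_2,p_3)$ uniformly over $\cA$, and the latter two are themselves only implicitly available; the ``$x=y=z$ at the optimum'' fact of Appendix~\ref{sec:appsmaxmin}, the bound $[s_{\min},s_{\max}]\subseteq[\tfrac34,\tfrac32]$, and the explicit boundary curves \eqref{eq:p2-p3-f} of $\cA$ are the natural inputs. A useful simplification is that $P(1\,|\,p_2,p_3)=0$ identically (the factor $s-1$ reflecting the eigenvalue $1-x-y-z=0$), so $P(s\,|\,p_2,p_3)=(s-1)\,Q(s\,|\,p_2,p_3)$ for an explicit cubic $Q$; evaluating the sign of $\frac{\dif}{\dif s}P$ at $s=s_{\min}$, $s=1$ and $s=s_{\max}$ then lets one trap $r_3,r_2,r_1$ relative to $[s_{\min},s_{\max}]$ without ever resolving the radicals, after which the remaining bookkeeping is routine.
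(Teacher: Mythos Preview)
Your overall plan---reduce to endpoints plus interior stationary points of the quartic, analyse the cubic $P'$ via its discriminant, and then locate the three real roots $r_1\geqslant r_2\geqslant r_3$ relative to $[s_{\min},s_{\max}]$---is exactly the route Appendix~\ref{sec:optP} takes. The paper depresses the cubic, shows $\Delta(p_2,p_3)\leqslant0$ on $\cA$, extracts the trigonometric roots \eqref{eq:Delta<0roots-3}, and records the ranges $\tfrac14\leqslant r_3\leqslant\tfrac34\leqslant r_1\leqslant\tfrac54$ and $\tfrac12\leqslant r_2\leqslant1$; combined with $[s_{\min},s_{\max}]\subseteq[\tfrac34,\tfrac32]$ this yields your $r_3\leqslant s_{\min}$, which is indeed the mechanism by which $r_3$ drops out.

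The gap is in the other two location inequalities. Appendix~\ref{sec:optP} does \emph{not} establish $s_{\min}\leqslant r_1$ or $r_2\leqslant s_{\max}$; only $r_3\leqslant s_{\min}$ follows from what is proved there. And $s_{\min}\leqslant r_1$ is not always true: at $(p_2,p_3)=(1,\tfrac14)$ (the maximally entangled point in Table~\ref{tab}) one has $\theta=0$, hence $r_1=\tfrac54$, while $s_{\min}=s_{\max}=\tfrac32$, so $s_{\min}>r_1$. In that situation $\min\{r_1,s_{\max}\}=\tfrac54\notin[s_{\min},s_{\max}]$, and your ``all four $v_i$ are values of $P$ on the interval'' step breaks down (indeed $P(\tfrac54\,|\,1,\tfrac14)<P(\tfrac32\,|\,1,\tfrac14)$). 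The paper's own case analysis (1a)/(1b) in Appendix~\ref{sec:optP} handles precisely this possibility by an explicit ``if $\ell_+\notin[s_{\min},s_{\max}]$'' branch rather than by a blanket inequality. So your instinct that the root-location step is the genuine obstacle is correct, but you should not outsource those two inequalities to Appendix~\ref{sec:optP}; either argue them directly (your sign-of-$P'$ idea at $s_{\min},1,s_{\max}$ is a reasonable start) or, as the paper does, replace them with a case distinction on whether $r_1,r_2$ land in the interval.
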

Proposition \ref{prop:Ploex} tells us that the maximal and minimal values of $P(s|p_2,p_3)$ over $[s_{\min},s_{\max}]$ are both included in a same set with only four values, which is an essential simplification.

The final step is to identify the two boundary functions
$F^-(p_2,p_3)$ and $F^+(p_2,p_3)$ for the fourth PT-moment $p_4$.
Due to the equivalent relations built in the first and second steps,
we may formulate $F^-(p_2,p_3)$ and $F^+(p_2,p_3)$ with the local
extreme values of $P(s|p_2,p_3)$ for $s\in[s_{\min},s_{\max}]$. The
results can be summarized in the following theorem.
\begin{thrm}
\label{thm:main} There is a two-qubit state $\rho_{AB}$ compatible
with the given PT-moment vector $\bsp^{(4)}=(p_1,p_2,p_3,p_4)$,
where $p_1=1$, if and only if $F^-(p_2,p_3) \leqslant p_4\leqslant
F^+(p_2,p_3)$, where $(p_2,p_3)\in\cA$, that is, $p_2,p_3,p_4$
satisfy the following conditions:
\begin{eqnarray}
\label{eq:thm-p2}
&&\frac14\leqslant p_2\leqslant1,\\
\label{eq:thm-p3}
&&f^-(p_2)\leqslant p_3\leqslant f^+(p_2), \\
\label{eq:thm-p4}
&&F^-(p_2,p_3) \leqslant p_4\leqslant F^+(p_2,p_3),
\end{eqnarray}
where $f^{\pm}(p_2)$ are given by Eq. \eqref{eq:p2-p3-f}, and
$F^\pm(p_2,p_3)$ are determined by
\begin{eqnarray}
F^-(p_2,p_3) &=& F(p_2,p_3)+4\min_{s\in[s_{\min},s_{\max}]}P(s|p_2,p_3), \label{eq:upper}\\
F^+(p_2,p_3)&=&
F(p_2,p_3)+4\max_{s\in[s_{\min},s_{\max}]}P(s|p_2,p_3),
\label{eq:lower}
\end{eqnarray}
via
\begin{eqnarray}
\label{eq:mid}
F(p_2,p_3) := \frac{3 p_2^2-6 p_2+8 p_3+1}6,
\end{eqnarray}
and the two extreme values given by Eqs. \eqref{eq:Plomin} and
\eqref{eq:Plomax}, respectively.
\end{thrm}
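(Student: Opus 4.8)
The plan is to prove the biconditional by treating its two implications separately, assembling the chain of reductions already prepared in Section~\ref{sec:mre} together with the known description of admissible two-qubit PT-spectra.

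\textbf{Necessity.} Let $\rho_{AB}$ be a two-qubit state with PT-moment vector $\bsp^{(4)}=(1,p_2,p_3,p_4)$. Writing $x\geqslant y\geqslant z$ for the three nonnegative eigenvalues of $\rho_{AB}^{\Gamma}$ and $w=1-x-y-z$ for the fourth, the quadruple $(x,y,z,w)$ satisfies \eqref{eq:maincons-1}--\eqref{eq:maincons-4} by \cite{Rana2013pra} (the bounds $x\leqslant1$, $z\geqslant0$, $w\geqslant-\tfrac12$, and at most one negative eigenvalue), so $(p_2,p_3)\in\cA$ --- i.e.\ \eqref{eq:thm-p2}--\eqref{eq:thm-p3} hold, cf.~\cite{Zhang2022} --- and $(x,y,z)\in\cV(p_2,p_3)$. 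Now $\det(\rho_{AB}^{\Gamma})=xyzw$; by the transformation recorded in Appendix~\ref{sec:MmtoP} one has $\det(\rho_{AB}^{\Gamma})=\Phi(s\,|\,p_2,p_3)=-P(s\,|\,p_2,p_3)$ with $s=x+y+z\in[s_{\min},s_{\max}]$, whence $\det(\rho_{AB}^{\Gamma})\in[m(p_2,p_3),M(p_2,p_3)]$ by \eqref{eq:MtoP}--\eqref{eq:mtoP} and Proposition~\ref{prop:Ploex}. Substituting into \eqref{eq:det-p4} and invoking \eqref{eq:F+todet-}--\eqref{eq:F-todet+} yields $F^-(p_2,p_3)\leqslant p_4\leqslant F^+(p_2,p_3)$, which is \eqref{eq:thm-p4} with $F^{\pm}$ given by \eqref{eq:upper}--\eqref{eq:lower}.

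\textbf{Sufficiency.} Conversely, assume \eqref{eq:thm-p2}--\eqref{eq:thm-p4}, and set $t:=\tfrac14\big[F(p_2,p_3)-p_4\big]$, so that $m(p_2,p_3)\leqslant t\leqslant M(p_2,p_3)$ by \eqref{eq:upper}--\eqref{eq:lower} and \eqref{eq:F+todet-}--\eqref{eq:F-todet+}. Since the power-sum constraints \eqref{eq:maincons-3}--\eqref{eq:maincons-4} fix the elementary symmetric functions $e_1=s$, $e_2$, $e_3$ of $(x,y,z)$ once $s$ and $(p_2,p_3)$ are fixed, the set $\cV(p_2,p_3)$ is parametrized by $s$ over the closed interval $[s_{\min},s_{\max}]$; hence $s\mapsto\Phi(s\,|\,p_2,p_3)$ is continuous there with image $[m(p_2,p_3),M(p_2,p_3)]$, and by the intermediate value theorem there is $s^{\ast}\in[s_{\min},s_{\max}]$ with $\Phi(s^{\ast}\,|\,p_2,p_3)=t$. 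Take $(x,y,z)$ to be the ordered real roots of $\lambda^3-s^{\ast}\lambda^2+e_2(s^{\ast})\lambda-e_3(s^{\ast})$ and $w=1-s^{\ast}$, so $(x,y,z)\in\cV(p_2,p_3)$ and $(x,y,z,w)$ obeys \eqref{eq:maincons-1}--\eqref{eq:maincons-2}. By the characterization of attainable two-qubit PT-spectra there is a state $\rho_{AB}$ whose partial transpose has spectrum $\{x,y,z,w\}$ --- the diagonal state $\diag(x,y,z,w)$ when $w\geqslant0$, and an explicit NPT construction (or the realizability results underlying \cite{Rana2013pra,Zhang2022}) when $w<0$. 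This $\rho_{AB}$ has first three PT-moments $1,p_2,p_3$ by construction, and by \eqref{eq:det-p4} its fourth PT-moment equals $F(p_2,p_3)-4xyzw=F(p_2,p_3)-4t=p_4$, so $\rho_{AB}$ realizes $\bsp^{(4)}$.

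\textbf{Main obstacle.} The analytic heavy lifting --- optimizing the quartic $P(s\,|\,p_2,p_3)$ over $[s_{\min},s_{\max}]$ and producing closed forms for $s_{\min},s_{\max},m,M$ --- is already absorbed into Proposition~\ref{prop:Ploex} and Appendices~\ref{sec:MmtoP}--\ref{sec:optP}, so the proof of the theorem itself is essentially the bookkeeping above. The genuinely delicate point lies in the sufficiency direction: one must know that \emph{every} eigenvalue quadruple meeting \eqref{eq:maincons-1}--\eqref{eq:maincons-2} --- especially on the NPT branch $w<0$ --- actually arises as the PT-spectrum of some two-qubit density matrix, and that $\cV(p_2,p_3)$ is connected so that $s$ sweeps the whole closed interval $[s_{\min},s_{\max}]$. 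The first is inherited from the spectral characterization already exploited in \cite{Rana2013pra,Zhang2022} (or can be supplied by an explicit family of states), and the second follows from the $s$-parametrization of $\cV(p_2,p_3)$ noted above together with the fact (Appendix~\ref{sec:appsmaxmin}) that the extrema $s_{\min},s_{\max}$ are attained only at $x=y=z$.
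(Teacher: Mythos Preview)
Your proof is correct and takes essentially the same approach as the paper: assemble the reductions of Section~\ref{sec:mre} (Eqs.~\eqref{eq:F+todet-}--\eqref{eq:mtoP}) together with Proposition~\ref{prop:Ploex}. The paper's own proof is in fact terser than yours --- it simply cites \cite{Zhang2022} for~\eqref{eq:thm-p2}--\eqref{eq:thm-p3} and then observes that~\eqref{eq:upper}--\eqref{eq:lower} follow by combining~\eqref{eq:F-todet+},\eqref{eq:F+todet-} with~\eqref{eq:MtoP},\eqref{eq:mtoP}, without separating the two implications or discussing sufficiency at all. Your explicit treatment of both directions, and your honest flagging of the realizability and connectedness issues, goes beyond what the paper writes down.

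Two small corrections to your closing paragraph. First, Lemma~\ref{le:smaxmin} in Appendix~\ref{sec:appsmaxmin} only proves that at $s_{\min},s_{\max}$ \emph{at least two} of $x,y,z$ coincide, not all three; the ``$x=y=z$'' formulation you quote is the main-text claim, which is stronger than what the appendix actually establishes. Second, the sentence ``connectedness follows from the $s$-parametrization together with the extrema fact'' is not a valid deduction: knowing that the endpoints are attained at degenerate triples does not by itself show that every intermediate $s\in[s_{\min},s_{\max}]$ produces a triple satisfying the inequality constraints~\eqref{eq:maincons-1}--\eqref{eq:maincons-2}. The paper does not close this gap either --- it is silently absorbed into writing $M,m$ as optima of the polynomial $P$ over the \emph{full} interval $[s_{\min},s_{\max}]$ rather than over the (a~priori possibly smaller) feasible $s$-set --- so you are not missing anything the paper supplies; you have simply made the lacuna visible.
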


\begin{proof}
It follows from Ref. \cite{Zhang2022} that the feasible set of
$(p_2,p_3)$ for all two-qubit states is $\cA$ given by Eq.
\eqref{eq:A}. It implies that Eqs. \eqref{eq:thm-p2} and
\eqref{eq:thm-p3} are necessarily satisfied. Combining Eqs.
\eqref{eq:F-todet+} and \eqref{eq:MtoP} we obtain the expression of
$F^-(p_2,p_3)$, i.e. Eq. \eqref{eq:upper}. Similarly, combining Eqs.
\eqref{eq:F+todet-} and \eqref{eq:mtoP} we obtain the expression of
$F^+(p_2,p_3)$, i.e. Eq. \eqref{eq:lower}. Furthermore, the two
extreme values $\min_{s\in[s_{\min},s_{\max}]}P(s|p_2,p_3)$ and
$\max_{s\in[s_{\min},s_{\max}]}P(s|p_2,p_3)$ has been derived in
Proposition \ref{prop:Ploex}. This completes the proof.
\end{proof}

\begin{remark}
Let
\begin{eqnarray}
Q(s|p_2,p_3)&:=&F(p_2,p_3)+4P(s|p_2,p_3) \notag\\
&=&4\Pa{
s^4-3s^3+\frac{7-p_2}2s^2+\frac{3p_2+2p_3-11}6s+\frac{p_2^2-2
p_2+3}8}.
\end{eqnarray}
Then we see that
\begin{eqnarray}
F^{+/-}(p_2,p_3) =
{\max/\min}_{s\in[s_{\min},s_{\max}]}Q(s|p_2,p_3).
\end{eqnarray}
\end{remark}
The conditions given by Eqs. \eqref{eq:thm-p2} - \eqref{eq:thm-p4}
determine a 3D region in the $(p_2,p_3,p_4)$-coordinate system,
which is corresponding to the set of all two-qubit states. Thus, we
characterize the 3D region of all two-qubit states via complete
PT-moments by Theorem \ref{thm:main}.

Based on the above result, we can give an operational criterion for
the detection of entanglement existing in two-qubit states as
follows.
\begin{prop}
\label{prop:ent}
\begin{enumerate}[(i)]
\item The entangled two-qubit state $\rho_{AB}$ is compatible with a PT-moment vector $\bsp^{(4)}$ if
and only if
\begin{eqnarray}
\label{eq:entc}
F(p_2,p_3)<p_4\leqslant F^+(p_2,p_3),\quad \forall(p_2,p_3)\in\cA.
\end{eqnarray}
\item The separable two-qubit state $\rho_{AB}$ is compatible with a PT-moment vector $\bsp^{(4)}$ if
and only if
\begin{eqnarray}
\label{eq:sepc}
F^-(p_2,p_3) \leqslant p_4\leqslant F(p_2,p_3),\quad
\forall(p_2,p_3)\in\cA.
\end{eqnarray}
\end{enumerate}
\end{prop}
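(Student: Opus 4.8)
The plan is to read Proposition~\ref{prop:ent} off from Theorem~\ref{thm:main} together with the sign of $\det(\rho^\Gamma_{AB})$, so the argument will be short. The hinge is Eq.~\eqref{eq:det-p4}, which I would rewrite as
\[
\det(\rho^\Gamma_{AB}) \;=\; \frac{F(p_2,p_3)-p_4}{4}.
\]
Once the PT-moment vector $\bsp^{(4)}=(1,p_2,p_3,p_4)$ is fixed, the right-hand side is a number, so $\det(\rho^\Gamma_{AB})$ takes the \emph{same} value for every two-qubit state realizing $\bsp^{(4)}$; in particular its sign is a function of $\bsp^{(4)}$ alone, and the surface $p_4=F(p_2,p_3)$ (equivalently $\det(\rho^\Gamma_{AB})=0$) is the candidate dividing surface.

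Next I would invoke the two-qubit entanglement criterion. By the Peres--Horodecki theorem a two-qubit state is separable iff it is PPT; and, as already recalled in Sect.~\ref{sec:mre} from \cite{Rana2013pra,Shen2020}, a two-qubit state $\rho_{AB}$ is NPT iff $\det(\rho^\Gamma_{AB})<0$ (so states with $\det(\rho^\Gamma_{AB})=0$ lie on the PPT, hence separable, side). Feeding in the displayed identity yields the dichotomy: $\rho_{AB}$ is entangled $\iff p_4>F(p_2,p_3)$, while $\rho_{AB}$ is separable $\iff p_4\le F(p_2,p_3)$.

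It then remains to intersect each region with the realizability set supplied by Theorem~\ref{thm:main}: $\bsp^{(4)}$ is the moment vector of some two-qubit state iff $(p_2,p_3)\in\cA$ and $F^-(p_2,p_3)\le p_4\le F^+(p_2,p_3)$. Imposing additionally $p_4>F(p_2,p_3)$ gives part~(i), $F(p_2,p_3)<p_4\le F^+(p_2,p_3)$; imposing $p_4\le F(p_2,p_3)$ gives part~(ii), $F^-(p_2,p_3)\le p_4\le F(p_2,p_3)$. Since $F(p_2,p_3)$ is precisely the polynomial of Eq.~\eqref{eq:mid}, the dividing surface between the entangled and separable parts of the 3D body of Theorem~\ref{thm:main} is exactly $p_4=F(p_2,p_3)$ with $(p_2,p_3)\in\cA$.

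I do not anticipate a real obstacle here. The one ingredient that is not purely formal is the equivalence $\rho^\Gamma_{AB}\ge 0\iff\det(\rho^\Gamma_{AB})\ge 0$ for two-qubit states -- it fails for a generic Hermitian $4\times4$ matrix, and holds here only because $\rho^\Gamma_{AB}$ has at most one negative eigenvalue -- but this is exactly the structural fact from \cite{Rana2013pra,Shen2020} already used in Sect.~\ref{sec:mre}, so it can simply be quoted. The remaining point is mere bookkeeping: the one-sided bounds in (i) and (ii) are to be read jointly with the realizability window $[F^-(p_2,p_3),F^+(p_2,p_3)]$ of Theorem~\ref{thm:main}, so that in (i) the constraint $p_4\ge F^-(p_2,p_3)$ and in (ii) the constraint $p_4\le F^+(p_2,p_3)$ stay in force; the substantive content of the proposition is the identification of the separating surface as $p_4=F(p_2,p_3)$.
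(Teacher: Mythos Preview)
Your proposal is correct and follows essentially the same route as the paper: both arguments combine Theorem~\ref{thm:main} with the identity $\det(\rho^\Gamma_{AB})=\tfrac14\bigl(F(p_2,p_3)-p_4\bigr)$ and the two-qubit criterion from \cite{Rana2013pra,Shen2020} that $\rho_{AB}$ is entangled iff $\det(\rho^\Gamma_{AB})<0$. Your write-up is in fact a little more explicit about the boundary case $\det(\rho^\Gamma_{AB})=0$ landing on the separable side, but otherwise the logic matches the paper's proof line for line.
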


\begin{proof}
According to Theorem \ref{thm:main}, it is equivalent to proving any two-qubit state $\rho_{AB}$ is entangled if and only if the following inequality holds:
\begin{eqnarray}
\label{eq:entid}
p_4>F(p_2,p_3),
\end{eqnarray}
where $F(p_2,p_3):=\frac{3p^2_2-6p_2+8p_3+1}6$. This follows directly from the fact that
\begin{eqnarray}
\det(\rho^\Gamma_{AB}) =\frac{3p^2_2 - 6p_2 + 8p_3 - 6 p_4 + 1}{4!}.
\end{eqnarray}
It follows from \cite{Rana2013pra,Shen2020} that $\rho_{AB}$ is
entangled if and only if $\det(\rho^\Gamma_{AB})<0$. It follows that
$3p^2_2 - 6p_2 + 8p_3 - 6 p_4 + 1<0$, and equivalently Eq.
\eqref{eq:entid} holds. This completes the proof.
\end{proof}
According to Proposition \ref{prop:ent}, we identify the dividing surface represented by $p_4=F(p_2,p_3)$ which separates the whole 3D region of all two-qubit states into two parts corresponding to separable states and entangled states respectively.

\section{Calculating procedure with examples}\label{sec:exam}

According to the above discussion, we can list the calculating
procedure of $F^{+/-}(p_2,p_3)$ (corresponding to $p^{\max/\min}_4$)
for $(p_2,p_3)\in\cA$.
\begin{center}
\begin{algorithm}[H]\caption{The max/min value of $p_4$}
 \KwIn{$(p_2,p_3)\in\cA$} \KwOut{$p^{\max/\min}_4$}
 choose randomly $(p_2,p_3)\in\cA$\;
 calculate $s_{\max/\min}\equiv s_{\max/\min}(p_2,p_3)$ via Eqs.~\eqref{eq:smax}/\eqref{eq:smin}\;
 calculate $M(p_2,p_3)/m(p_2,p_3)$ via
 Eqs.~\eqref{eq:MtoP}/\eqref{eq:mtoP}\;
 calculate $F^{\pm}(p_2,p_3)$ via
 Eqs.~\eqref{eq:F+todet-}/\eqref{eq:F-todet+}\;
 Output $p^{\max}_4=F^+(p_2,p_3)$ and $p^{\min}_4=F^-(p_2,p_3)$.
\end{algorithm}
\end{center}
Following the above calculating procedure, we list some analytical examples in Table~\ref{tab} below.
\begin{table}[h]
\centering
\begin{tabular}{|c|c|c|c|c|c|c|}
\hline $(p_2,p_3)\in\cA$ & $s_{\min}$ & $s_{\max}$ & $m$ & $M$ & $F^-$ & $F^+$\\
\hline $(\tfrac14,\tfrac1{16})$ & $\tfrac34$ & $\tfrac34$ & $\tfrac1{256}$ & $\tfrac1{256}$ & $\tfrac1{64}$ & $\tfrac1{64}$\\
\hline $(\tfrac13,\tfrac5{36})$ & $\tfrac56$ & $\tfrac56$ & $\frac1{432}$ & $\frac1{432}$ & $\frac7{108}$ & $\frac7{108}$\\
\hline $(\tfrac13,\tfrac19)$ & $1$ & $1$ & $0$ & $0$ & $\frac1{27}$ & $\frac1{27}$\\
\hline $(\tfrac12,\tfrac{6+\sqrt{3}}{24})$ & $\frac{9+\sqrt{3}}{12}$ & $\frac{9+\sqrt{3}}{12}$ & $\frac{2\sqrt{3}-3}{576}$ & $\frac{2\sqrt{3}-3}{576}$ & $\frac{7+2 \sqrt{3}}{48}$ & $\frac{7+2 \sqrt{3}}{48}$\\
\hline $(\tfrac12,\tfrac14)$ & $1$ & $\frac{3+\sqrt{2}}4$ & $-\frac1{256}$ & $0$ & $\frac18$ & $\frac9{64}$\\
\hline $(\tfrac12,\tfrac{6-\sqrt{3}}{24})$ & $\frac{3+\sqrt{3}}4$ & $\frac{3+\sqrt{3}}4$ &$-\frac{2\sqrt{3}+3}{576}$ & $-\frac{2\sqrt{3}+3}{576}$ & $\frac{7-2 \sqrt{3}}{48}$ & $\frac{7-2 \sqrt{3}}{48}$\\
\hline $(1,1)$ & $1$ & $1$ & $0$ & $0$ & $1$ & $1$\\
\hline $(1,\tfrac14)$ & $\frac32$ & $\frac32$ & $-\frac1{16}$ & $-\frac1{16}$ & $\frac14$ & $\frac14$\\
\hline
\end{tabular}
\caption{Specific examples with analytical computations}\label{tab}
\end{table}

In the following we shall discuss two families of states with extensive applications in quantum information processing by virtue of the above calculating procedure.

\begin{exam}[The family of Werner states]
In 1989, Werner analytically constructed a family of $\bsU\ot\bsU$
invariant states to investigate local hidden variable (LHV) models.
As a toy model, we consider the two-qubit Werner states formulated
as
\begin{eqnarray}
\label{eq:defwerners} \rho_W(w) =
w\out{\psi^-}{\psi^-}+(1-w)\frac{\mathbf{1}_4}4
\end{eqnarray}
where $\ket{\psi^-}=\frac{\ket{01}-\ket{10}}{\sqrt{2}}$ and
$w\in[0,1]$. By calculation, the 2nd, 3rd, and 4th PT-moments of
two-qubit Werner state $\rho_W(w)$ are, respectively,
\begin{eqnarray*}
\begin{cases}
p_2=\frac{1+3w^2}4\\
p_3=\frac{-6 w^3+9 w^2+1}{16}\\
p_4=\frac{21 w^4-24 w^3+18 w^2+1}{64}
\end{cases}\quad (w\in[0,1])
\end{eqnarray*}
The region for the family of Werner states in the $(p_2,p_3,p_4)$-coordinate system is just a curve, which is visualized in the following
Figure~\ref{fig:werner3d}.
\begin{figure}[h!]\centering
{\begin{minipage}[b]{0.58\linewidth}
\includegraphics[width=1\textwidth]{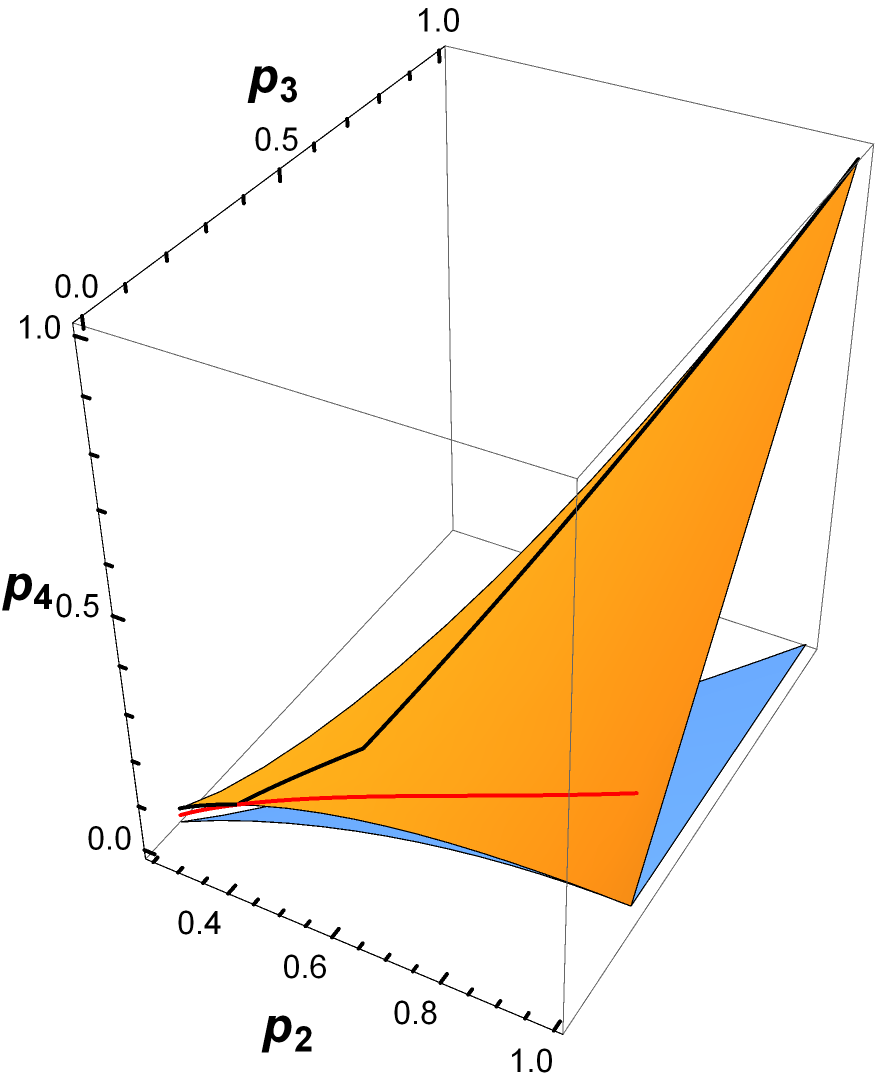}
\end{minipage}}
\caption{(Color Online) The family of Werner states corresponds to the red curve in the $(p_2,p_3,p_4)$-coordinate system,
crossing the yellow surface represented by
$p_4=F(p_2,p_3)$ at the intersection point $P_0$ with
$(p_2,p_3,p_4)=(\frac13,\frac19,\frac1{27})$.
It is also known from Ref. \cite{Zhang2022} that in the $(p_2,p_3)$-coordinate system the family of Werner states corresponds to the lower boundary curve of 2D region $\cA$ marked in blue on the bottom plane, where $p_2\in[\frac 14, 1]$. It means that the lower boundary of the blue region is the projection of the red curve.
Since the yellow surface $p_4=F(p_2,p_3)$ divides the entangled and separable regions, the part of the red curve over the yellow surface (i.e. those points with $p_2>\frac 13$) are
entangled Werner states, and the part of the red curve below the yellow
surface (i.e. those points with $p_2\leqslant\frac 13$) are separable Werner states.
Accordingly for the projection of the red curve, the part of the lower boundary curve of the blue region with $\frac 14 \leqslant p_2\leqslant \frac 13$ are separable Werner states, and that part with $\frac 13 <p_2\leqslant 1$ are entangled Werner states \cite{Zhang2022}.
In addition, the black curve on the yellow surface is projected as the curve given by $p_3=\phi_4(p_2)$ which divides the separable part and entangled part of 2D region $\cA$ \cite{Zhang2022}. Moreover, the projection of the black curve with $\frac 14 \leqslant p_2\leqslant \frac 13$ coincides with the corresponding part of the lower boundary curve of the blue region.}\label{fig:werner3d}
\end{figure}
\end{exam}

\begin{exam}[The family of Bell-diagonal states]
The Bell-diagonal states \cite{Horodecki1996} in the two-qubit
system can be written as
\begin{eqnarray}
\label{eq:defbelldiag}
\rho_{\text{Bell}}=\frac14(\mathbf{1}_4+\sum_{i=1}^3 t_i
\sigma_i\otimes\sigma_i),
\end{eqnarray}
where $\sigma_i$ for $i=1,2,3$ are the three Pauli operators as
$\sigma_1=|0\rangle\langle 1|+ |1\rangle\langle 0|$, $\sigma_2={\rm
i}|0\rangle\langle 1|- {\rm i}|1\rangle\langle 0|$,
$\sigma_3=|0\rangle\langle 0|- |1\rangle\langle 1|$. Hence, a
Bell-diagonal state is specified by three real variables $t_1, t_2$,
and $t_3$ such that the formula in Eq. \eqref{eq:defbelldiag} represents a normalized positive semidefinite operator, which is equivalent to the following constraints:
\begin{eqnarray}
\label{eq:bdiag-1}
\begin{cases}
1-t_1-t_2-t_3\geqslant 0,\\
1-t_1+t_2+t_3\geqslant 0,\\
1+t_1-t_2+t_3\geqslant 0,\\
1+t_1+t_2-t_3\geqslant 0.
\end{cases}
\end{eqnarray}
Denote by $D_{\text{Bell}}$ the set of tuples $(t_1,t_2,t_3)$
satisfying the above system of inequalities. Because all the four
eigenvalues of a two-qubit state are in $[0,1]$, it follows from Eq.
\eqref{eq:defbelldiag} that $t_i\in[-1,1]$ for $i=1,2,3$. That is,
$D_{\text{Bell}}\subset[-1,1]^3$. Furthermore, the Bell-diagonal
states can be graphically described by a tetrahedron. One can show
that a Bell-diagonal state is separable if and only if
$\abs{t_1}+\abs{t_2}+\abs{t_3}\leqslant 1$ holds. Graphically, the
set of Bell-diagonal states is a tetrahedron and the set of
separable Bell-diagonal states is an octahedron
\cite{Horodecki1996}, which is denoted by $D_{\text{Bellsep}}$. By
calculation, all eigenvalues of $\rho^{\Gamma}_{\text{Bell}}$ are
given by $\frac{1+t_1-t_2-t_3}4, \frac{1-t_1+t_2-t_3}4,
\frac{1+t_1+t_2+t_3}4,\frac{1-t_1-t_2+t_3}4$. It follows from the definition that
\begin{eqnarray}
\label{eq:bdiag-2}
\begin{cases}
p_2=\frac{1+\sum^3_{i=1}t^2_i}4\\
p_3=\frac{1+6t_1t_2t_3+3\sum^3_{i=1}t^2_i}{16}\\
p_4=\frac{1+24t_1t_2t_3+6\sum^3_{i=1}t^2_i+\sum^3_{i=1}t^4_i+6\sum_{1\leqslant
i<j\leqslant3}t^2_i t^2_j}{64}
\end{cases}
\end{eqnarray}
By virtue of the parametric equations formulated in Eq. \eqref{eq:bdiag-2}, we can draw the 3D region corresponding to the family of Bell-diagonal states in the
$(p_2,p_3,p_4)$-coordinate system , as depicted in Figure~\ref{fig:bell3d} below.
\begin{figure}[h!]
\subfigure[]{\begin{minipage}[b]{0.45\linewidth}
\includegraphics[width=1\textwidth]{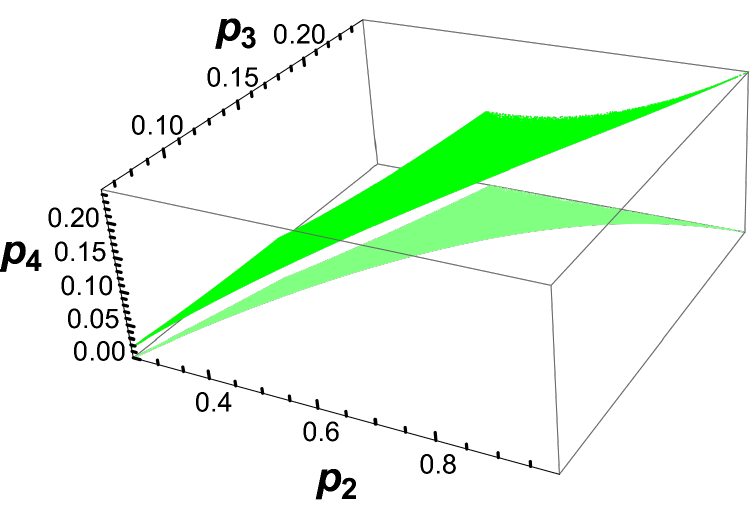}
\end{minipage}}
\subfigure[]{\begin{minipage}[b]{0.45\linewidth}
\includegraphics[width=1\textwidth]{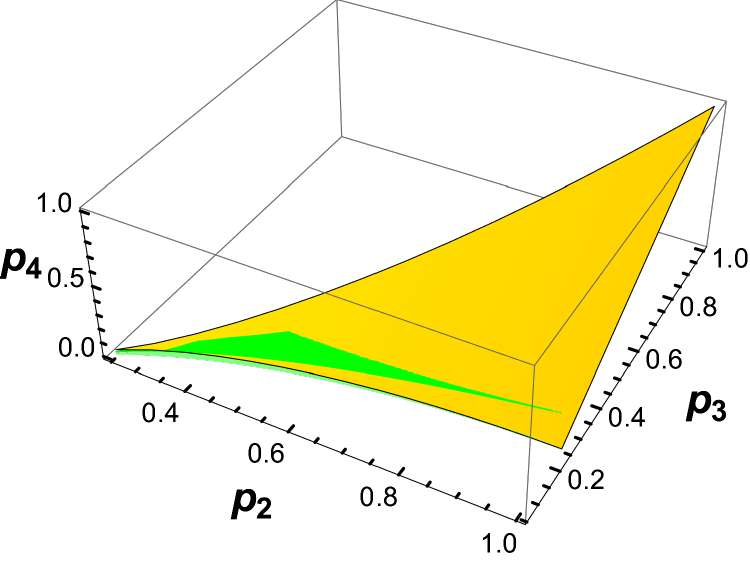}
\end{minipage}}
\caption{(Color Online) (a) The upper region in green comprised of
$(p_2,p_3,p_4)$ corresponds to the family of Bell-diagonal states, and the lower region in light green is 2D comprised of $(p_2,p_3)$ and is the projection of the upper region; (b) The dividing
yellow surface represented by $p_4=F(p_2,p_3)$ crosses the green region of $(p_2,p_3,p_4)$. Those
points over the surface $p_4=F(p_2,p_3)$ correspond to entangled Bell-diagonal
states; those ones under the surface $p_4=F(p_2,p_3)$ correspond to separable Bell-diagonal
states. }\label{fig:bell3d}
\end{figure}
We also distinguish the separable region vs entangled region for
the family of Bell-diagonal states, see
Figure~\ref{fig:bellsepvsent}.
\begin{figure}[h!]\centering
{\begin{minipage}[b]{0.6\linewidth}
\includegraphics[width=1\textwidth]{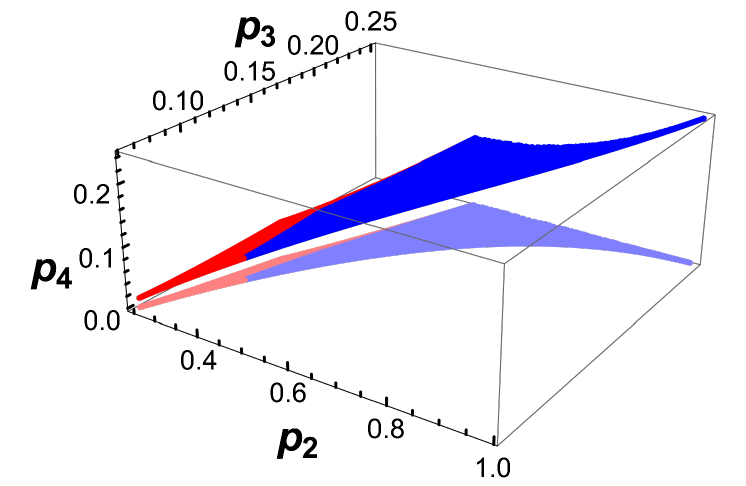}
\end{minipage}}
\caption{(Color Online) For the family of Bell-diagonal states, the red (blue) 3D region comprised of $(p_2,p_3,p_4)$
corresponds to separable (entangled) states with their projected regions comprised of
$(p_2,p_3)$ marked in light red (blue). }\label{fig:bellsepvsent}
\end{figure}
\end{exam}

\section{Visualization of the 3D region of all two-qubit states}\label{sec:fig}

As shown in Figure~\ref{fig:werner3d}, Figure~\ref{fig:bell3d} and
Figure~\ref{fig:bellsepvsent}, we have visualized the 3D regions
corresponding to the family of Werner states and that of
Bell-diagonal states. In this section we further visualize the 3D
region in the $(p_2,p_3,p_4)$-coordinate system corresponding to the
set of all two-qubit states in Figure \ref{fig:2qubitsepvsent},
according to Theorem \ref{thm:main}. Based on Proposition
\ref{prop:ent}, we identify the separable region (corresponding to
the set of separable states) and the entangled region (corresponding
to the set of entangled states), and mark such two regions in red
and blue respectively in Figure~\ref{fig:2qubitsepvsent} (a). We
also add the yellow surface represented by $p_4=F(p_2,p_3)$ in
Figure \ref{fig:2qubitsepvsent} (b), which crosses the 3D region and
divides the separable and entangled regions.

Finally, we note that the 3D region comprised of $(p_2,p_3,p_4)$ in
Figure~\ref{fig:2qubitsepvsent} is bounded by the two surfaces $S_+$
and $S_-$ defined below:
\begin{eqnarray}\label{eq:vs-1}
S_+ &: & \quad p_4 = F(p_2,p_3) + \frac14,  \\
S_-&: & \quad p_4 = F(p_2,p_3) - \frac{1}{64}.
\end{eqnarray}
It follows from Eq.~\eqref{eq:mMbound} that $m(p_2,p_3)$ is lower
bounded by $-\frac{1}{16}$ and $M(p_2,p_3)$ is upper bounded by
$\frac{1}{256}$. Substituting such two bounds into Eqs.
\eqref{eq:F+todet-} and \eqref{eq:F-todet+} respectively, we
conclude that
\begin{equation}
\label{eq:S+-} F(p_2,p_3) - \frac{1}{64}\leqslant
F^-(p_2,p_3)\leqslant F^+(p_2,p_3)\leqslant F(p_2,p_3) + \frac14.
\end{equation}
Graphically, the 3D region comprised of $(p_2,p_3,p_4)$
corresponding to the set of all two-qubit states is lower bounded by
$S_-$ and upper bounded by $S_+$.
\begin{figure}[h!]
\subfigure[]{\begin{minipage}[b]{0.45\linewidth}
\includegraphics[width=1\textwidth]{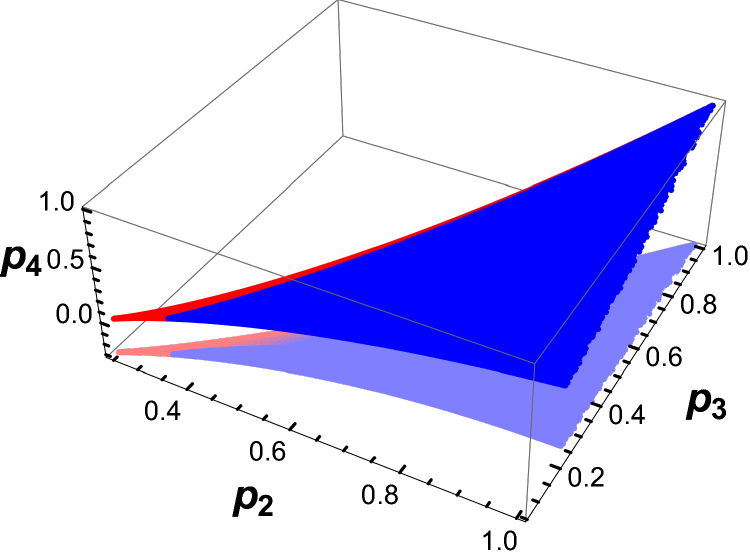}
\end{minipage}}
\subfigure[]{\begin{minipage}[b]{0.45\linewidth}
\includegraphics[width=1\textwidth]{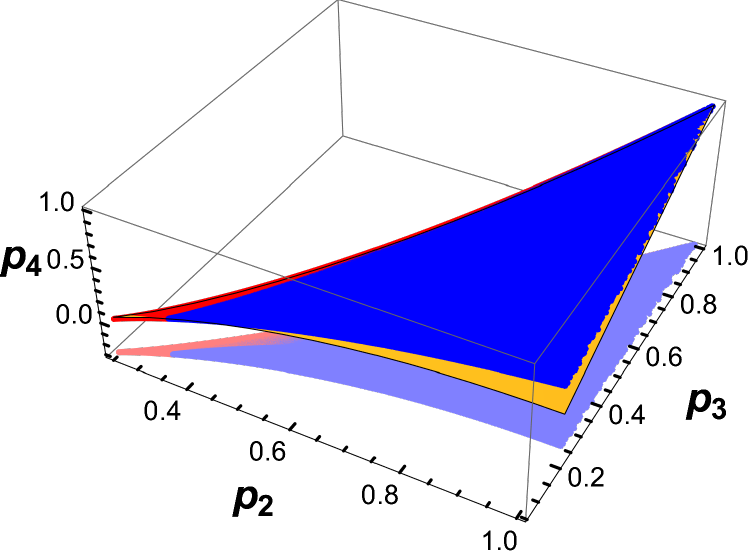}
\end{minipage}}
\caption{(Color Online) (a) The red (blue) 3D region of
$(p_2,p_3,p_4)$ corresponds to the set of all two-qubit separable
(entangled) states with their projections of $(p_2,p_3)$ marked in light red (blue) on the plane
$p_4=-0.3$; (b) The dividing surface $p_4=F(p_2,p_3)$ in yellow
crosses the 3D region of $(p_2,p_3,p_4)$.}\label{fig:2qubitsepvsent}
\end{figure}

\section{Comparison between PT-moments and some entanglement measures on the entanglement detection}\label{sec:com}

Negativity and concurrence are two common entanglement measures which have been widely investigated. For any entanglement measure $E$, $E(\rho_{AB})$ is equal to zero if the state $\rho_{AB}$ is separable, and $E(\rho_{AB})$ is greater than zero if the state $\rho_{AB}$ is entangled. Thus, an analytically computable entanglement measure is also an effective tool to detect entanglement. In this section we shall compare PT-moments and concurrence in the aspect of detecting two-qubit entanglement, via the computable entanglement measure, i.e. the negativity.

For two-qubit states, it is known from Ref. \cite{Verstraete2001}
that the negativity (denoted by $\cN$) and the
concurrence (denoted by $C$) have the following relation:
\begin{equation}
\label{eq:NandC-1}
  \sqrt{(1-C)^2+C^2}-(1-C)\leqslant \cN\leqslant C.
\end{equation}
According to this relation, one can verify that $\cN(\rho_{AB})=0$ if and only if $C(\rho_{AB})=0$, and $\cN(\rho_{AB})>0$ if and only if $C(\rho_{AB})>0$. Moreover, due to the PPT criterion on two-qubit states, we conclude that $\cN(\rho_{AB})=0$ if and only if $\rho_{AB}$ is separable. Therefore, using the concurrence to detect two-qubit entanglement is as powerful as using the negativity, although the values of the two measures on some entangled state could be different. Based on this fact, we may use the negativity as an intermediate parameter to discuss the relation between concurrence and PT-moments. Both the negativity and the PT-moment are defined by the eigenvalues of $\rho^\Gamma$. Here we assume that $\lambda_1,\ldots,\lambda_4$ are the four eigenvalues of $\rho^\Gamma$ in the descending order. Denote the elementary symmetric polynomials of $\lambda_1,\ldots,\lambda_4$ by
$e_1,\ldots,e_4$, i.e.
\begin{equation}
  \label{eq:esp-1}
 e_1=\sum^4_{k=1}\lambda_k,~~ e_2=\sum_{1\leqslant i<j\leqslant
 4}\lambda_i\lambda_j,~~ e_3=\sum_{1\leqslant i<j<k\leqslant
 4}\lambda_i\lambda_j\lambda_k,~~e_4=\prod^4_{k=1}\lambda_k.
\end{equation}
If we get measured all PT-moments $p_1,p_2,p_3,p_4$ with the first PT-moment fixed by constant $1$, then we can employ the following formulae \cite{Duan2022}
\begin{eqnarray}
\label{eq:esp-2}
e_k=\frac1{k!}\Abs{\begin{array}{ccccc}
                     p_1 & 1 & 0 & \cdots & 0 \\
                     p_2 & p_1 & 2 & \cdots & 0 \\
                     \vdots & \vdots & \vdots & \ddots & \vdots \\
                     p_{k-1} & p_{k-2} & p_{k-3} & \cdots & k-1 \\
                     p_k & p_{k-1} & p_{k-2} & \cdots & p_1
                   \end{array}}(k\geqslant1)
\end{eqnarray}
to calculate the value for each $e_k$. Then we construct the $4$th-order
polynomial equation as
\begin{eqnarray}
\label{eq:esp-3}
x^4-e_1 x^3+e_2x^2-e_3x+e_4=0.
\end{eqnarray}
According to the relationship between roots and coefficients, the
four roots of Eq. \eqref{eq:esp-3} are $\lambda_1,\cdots,\lambda_4$,
and thus we can numerically calculate $\lambda_1,\cdots,\lambda_4$
for measured $p_1,p_2,p_3,p_4$. For two-qubit state $\rho_{AB}$ we
have $\cN(\rho_{AB})=\max(0,-2\lambda_{\min})$. When $p_2,p_3,p_4$
satisfy $F^-(p_2,p_3)\leqslant p_4\leqslant F(p_2,p_3)$ for
$(p_2,p_3)\in\cA$, the state $\rho_{AB}$ is separable, and thus all
eigenvalues $\lambda_k$ are positive, which leads to
$\cN(\rho_{AB})=0$. When $p_2,p_3,p_4$ satisfy $F(p_2,p_3)<
p_4\leqslant F^+(p_2,p_3)$ for $(p_2,p_3)\in\cA$, the state
$\rho_{AB}$ is entangled, and thus there is only one negative
eigenvalue namely $\lambda_4$, which leads to $\cN(\rho_{AB})>0$.
Based on the equivalence between the negativity and the concurrence
in the aspect of detecting two-qubit entanglement, we finally obtain
the specific relation between PT-moments and the concurrence as
follows.
\begin{itemize}
\item $F^-(p_2,p_3)\leqslant p_4\leqslant F(p_2,p_3)$ for
$(p_2,p_3)\in\cA$ if and only if $C=0$;
\item $F(p_2,p_3)< p_4\leqslant F^+(p_2,p_3)$ for $(p_2,p_3)\in\cA$
if and only if $C>0$.
\end{itemize}
This tells us that the sign of $p_4-F(p_2,p_3)$ can play the role of an entanglement witness as the concurrence. If the sign is $+1$, it means the two-qubit state is entangled. Otherwise, it means the two-qubit state is separable.
Although the criterion via PT-moments has the same performance as the concurrence and equivalently the negativity in the aspect of detecting two-qubit entanglement, there is a unique advantage of using PT-moments. That is, both the separable and entangled regions can be visualized in the $(p_2,p_3,p_4)$-coordinate. This is also a major motivation of this work.

\section{Conclusion}\label{sec:con}

In this paper, we investigated the characterization of two-qubit
states using all PT-moments. Based on the relation between the
second and third PT-moments, i.e., $p_2$ and $p_3$ derived in Ref.
\cite{Zhang2022}, we further involved the last PT-moment $p_4$ for
two-qubit states, and characterized the relation between $p_4$ and
the pair of PT-moments $(p_2,p_3)$. It is a graphical
characterization of all two-qubit states by plotting the 3D region
comprised of $(p_2,p_3,p_4)$. Specifically, $p_4$ can be bounded by
two functions in terms of two variables $p_2$ and $p_3$. We
formulated the lower bound function $F^-(p_2,p_3)$ and the upper
bound function $F^+(p_2,p_3)$ which can be numerically calculated
for a given pair $(p_2,p_3)$. Furthermore, we also identified the
function $F(p_2,p_3)$ which divides the two regions corresponding to
the set of separable states and the set of entangled states
respectively. Due to the measurability of PT-moments, we provided an
operational criterion for the detection of two-qubit entanglement.
To clarify how to obtain $F^-(p_2,p_3)$ and $F^+(p_2,p_3)$, we
summarized the calculating procedure, and calculated the two bound
functions for the family of Werner states and the family of
Bell-diagonal states following the calculating procedure. For such
two families of states, we plotted the 3D regions accordingly, and
marked the dividing surfaces. Generally, we further visualized the 3D region
in the $(p_2,p_3,p_4)$-coordinate system corresponding to the set of
all two-qubit states, and marked the separable region and entangled
region in different colors. Finally, we concluded that the criterion we proposed in terms of the triple $(p_2,p_3,p_4)$ has the same performace as the two common measures, i.e. the negativity and the concurrence, in the aspect of detecting two-qubit entanglement.

\subsubsection*{Acknowledgements}
This research is supported by Zhejiang Provincial Natural Science Foundation of China under Grant No. LZ23A010005 and by NSFC under Grant No.11971140.
YS is supported by the Fundamental Research Funds for the Centeral Universities under Grant No. JUSRP123029.



\newpage

\appendix
\appendixpage
\addappheadtotoc

\section{The rough conclusion on $s_{\max}$ and $s_{\min}$}
\label{sec:appsmaxmin}

\begin{lem}
\label{le:smaxmin}
The maximum and minimum values $s_{\max}$ and $s_{\min}$ defined by Eqs. \eqref{eq:smax} and \eqref{eq:smin} respectively are reached only if at least two of three variables in $s(x,y,z)$ are equal.
\end{lem}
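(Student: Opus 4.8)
The plan is to treat the lemma as a constrained‑optimization problem and apply the Lagrange‑multiplier / KKT conditions, exploiting that the two nonlinear equality constraints have gradients of a very special form once they are written through the dependent variable $t:=1-x-y-z$. First note that $\cV(p_2,p_3)$ is compact — it is carved out of the cube $[-\tfrac12,1]^3$ by polynomial equalities and inequalities — so $s_{\max}$ and $s_{\min}$ are attained, and it suffices to analyse the points where this happens. Write $g_2(x,y,z):=x^2+y^2+z^2+t^2$ and $g_3(x,y,z):=x^3+y^3+z^3+t^3$, so that $\cV(p_2,p_3)=B\cap\{g_2=p_2,\ g_3=p_3\}$ with $B:=\{1\ge x\ge y\ge z\ge 0,\ z\ge t\ge -\tfrac12\}$.

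The crux is to rule out an extremum lying in the relative interior of $B$, i.e. at a point with $x>y>z$ and with $x<1$, $z>0$, $z>t$, $t>-\tfrac12$ all strict. There the Lagrange condition $\nabla s=\lambda\nabla g_2+\mu\nabla g_3$ holds; since $\partial t/\partial x=\partial t/\partial y=\partial t/\partial z=-1$, one has $\nabla g_2=2(x-t,\,y-t,\,z-t)$ and $\nabla g_3=3(x^2-t^2,\,y^2-t^2,\,z^2-t^2)$, so the three scalar equations read $1=2\lambda(u-t)+3\mu(u^2-t^2)$ for $u\in\{x,y,z\}$. This exhibits $x,y,z$ as roots of the single polynomial $\tau\mapsto 3\mu\,\tau^2+2\lambda\,\tau-(1+2\lambda t+3\mu t^2)$, which has degree at most $2$ and cannot vanish identically (that would force $1=0$); a nonzero polynomial of degree $\le 2$ has at most two distinct roots, so two of $x,y,z$ must coincide — contradicting $x>y>z$. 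Hence $s_{\max}$ and $s_{\min}$ are attained on $\partial B$.

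A boundary point of $B$ lies on one of the facets $\{x=y\}$, $\{y=z\}$, $\{x=1\}$, $\{z=0\}$, $\{z=t\}$, $\{t=-\tfrac12\}$. On $\{x=y\}$ and $\{y=z\}$ the conclusion is immediate. On $\{x=1\}$ the bound $p_2\le 1$ gives $y^2+z^2+t^2=p_2-1\le 0$, whence $y=z=t=0$; in particular $y=z$. On $\{t=-\tfrac12\}$, i.e. $x+y+z=\tfrac32$, we get $x^2+y^2+z^2=p_2-\tfrac14\le\tfrac34$, while the power‑mean inequality gives $x^2+y^2+z^2\ge\tfrac13(x+y+z)^2=\tfrac34$; equality forces $x=y=z=\tfrac12$. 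This leaves the facets $\{z=0\}$ (with $x>y>0$, $t<0$) and $\{z=t\}$ (with $x>y>z$, and then necessarily $z=t\ge 0$, for otherwise $\rho_{AB}^\Gamma$ would have two negative eigenvalues, contradicting \cite{Rana2013pra}). On each of these, the facet equation together with $g_2=p_2$, $g_3=p_3$ is a system of three equalities in $(x,y,z)$, so $\cV(p_2,p_3)$ meets the facet only in isolated points, each of which — lying on $\partial B$ — is an endpoint of an arc of $\cV(p_2,p_3)$. Since $s$ has no critical point in the interior of any arc (previous paragraph), $s$ is strictly monotone along each arc; computing the sign of the directional derivative of $s$ along the unique admissible tangent at such an endpoint — on $\{z=t\}$ this tangent is simply $(\delta x,\delta y,\delta z)=(0,0,1)$, along which all three moments are preserved to first order and $s$ strictly increases — one checks that this endpoint cannot be the extremizer sought unless two of $x,y,z$ already agree (equivalently, one may phrase the conclusion as ``two of the four eigenvalues $x,y,z,1-x-y-z$ coincide'', which on $\{z=t\}$ holds automatically). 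Combining all cases gives the lemma.

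The hard part is this final step: the facets $\{z=0\}$ and $\{z=t\}$ are the only genuinely delicate cases, because the stationarity conditions on them do not by themselves force $x=y$; there one must supplement the argument with a direct sign analysis of the directional derivative of $s$ along the single feasible tangent direction (or work with the multiset of eigenvalues rather than with $x,y,z$ individually). Everything else — compactness, the quadratic‑root argument inside $B$, and the collapse of the facets $\{x=1\}$ and $\{t=-\tfrac12\}$ via $p_2\le 1$ and the power‑mean inequality — is routine.
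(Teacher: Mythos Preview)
Your interior argument is essentially the paper's proof. The paper likewise forms the Lagrangian with multipliers for the two moment constraints, takes pairwise differences of the stationarity equations, and obtains the factored relations
\[
(x-y)\bigl[3\nu(x+y)+2\mu\bigr]=(x-z)\bigl[3\nu(x+z)+2\mu\bigr]=(y-z)\bigl[3\nu(y+z)+2\mu\bigr]=0,
\]
from which it concludes that two of $x,y,z$ coincide. Your reformulation via $t=1-x-y-z$ and the observation that $x,y,z$ must be roots of a single quadratic in $\tau$ is just a repackaging of the same computation.

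Where you go beyond the paper is the boundary analysis: the paper's proof stops at interior stationarity and never mentions the inequality constraints \eqref{eq:maincons-1}--\eqref{eq:maincons-2}. Your disposal of the facets $\{x=y\}$, $\{y=z\}$, $\{x=1\}$, and $\{t=-\tfrac12\}$ is clean and correct.

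There is, however, a real gap in your treatment of $\{z=t\}$ (and, by the same token, $\{z=0\}$). You correctly compute that at a point of $\{z=t\}$ with $x>y>z$ the only feasible tangent is $(0,0,1)$ and that $s$ strictly \emph{increases} along it into $z>t$. This rules such a point out for $s_{\max}$, but not for $s_{\min}$: moving into the chamber raises $s$, so the boundary point is a local minimum of $s$ along the arc, and nothing you have written excludes it from being the global minimum. Your parenthetical escape --- rephrasing the conclusion as ``two of the four eigenvalues $x,y,z,1-x-y-z$ coincide'' --- is a weaker statement than the lemma as written, which requires two of the three variables $x,y,z$ to coincide. So either the lemma should be read in that weaker sense, or a further argument is needed to show that such a boundary point is never the global $s_{\min}$; the paper's proof, consisting only of the stationary-point computation, does not address this either.
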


\begin{proof}
By using Lagrange Multiplier Method, we define the Lagrange function
$L(x,y,z;\mu,\nu)$ as
\begin{eqnarray}
L(x,y,z;\mu,\nu)&=&
x+y+z+\mu\Pa{x^2+y^2+z^2+(1-x-y-z)^2-p_2}\notag\\
&&~~~~~~~~~~~~~~+\nu\Pa{x^3+y^3+z^3+(1-x-y-z)^3-p_3}.
\end{eqnarray}
In order to get the stationary points, we let all partial
derivatives be zero. Then it follows that
\begin{eqnarray}
0=\frac{\partial L}{\partial x} &=& 2\mu(2x+y+z-1)-3\nu(y+z-1)(2x+y+z-1)+1\label{eq:x},\\
0=\frac{\partial L}{\partial y} &=& 2\mu(x+2y+z-1)-3\nu(x+z-1)(x+2y+z-1)+1\label{eq:y},\\
0=\frac{\partial L}{\partial z} &=& 2\mu(x+y+2z-1)-3\nu(x+y-1)(x+y+2z-1)+1\label{eq:z},\\
0=\frac{\partial L}{\partial \mu} &=&x^2+y^2+z^2+(1-x-y-z)^2-p_2\label{eq:p2},\\
0=\frac{\partial L}{\partial \nu}
&=&x^3+y^3+z^3+(1-x-y-z)^3-p_3.\label{eq:p3}
\end{eqnarray}
The three differences given by \eqref{eq:x}$-$\eqref{eq:y},
\eqref{eq:x}$-$\eqref{eq:z}, and \eqref{eq:y}$-$\eqref{eq:z} lead to:
\begin{eqnarray}
(x-y)[3\nu(x+y)+2\mu]=(x-z)[3\nu(x+z)+2\mu]=(y-z)[3\nu(y+z)+2\mu]=0.
\end{eqnarray}
It follows from the above chain of equations that the stationary points satisfy that at least two of three variables $x,y,z$ are equal.
\end{proof}

\section{The global maximal and minimal values of $\det(\rho_{AB}^\Gamma)$ over all two-qubit states}
\label{sec:gdet}

For the global maximal
value, we obtain that
\begin{equation}
\label{eq:detgmax}
\begin{aligned}
&\max\Set{\det(\rho^\Gamma_{AB}):\rho_{AB}\in\density{\complex^2\ot\complex^2}}\\
&= \max \Set{xyz(1-x-y-z):x\geqslant y\geqslant z\geqslant
1-x-y-z\geqslant0}\\
&=\frac1{256},
\end{aligned}
\end{equation}
which is attained only if $x=y=z=\frac14$. For the global minimal value, we obtain that
\begin{equation}
\label{eq:detgmin}
\begin{aligned}
&\min\Set{\det(\rho^\Gamma_{AB}):\rho_{AB}\in\density{\complex^2\ot\complex^2}}\\
&= \min \Set{xyz(1-x-y-z):x\geqslant y\geqslant z>0,~0>1-(x+y+z)\geqslant-\frac12}\\
&= \min \Set{xyz(1-x-y-z):x\geqslant y\geqslant z>0,~\frac32\geqslant x+y+z>1}\\
&= - \max\Set{xyz(x+y+z-1):x\geqslant y\geqslant z>0,
\frac32\geqslant x+y+z>1}\\
&=-\frac1{16},
\end{aligned}
\end{equation}
which is attained only if $x=y=z=\frac12$. It implies that $\det(\rho^\Gamma_{AB})$ attains its global minimal value if and only if $\rho_{AB}$ is maximally entangled. Now we conclude that
\begin{eqnarray}
-\frac1{16}\leqslant \det(\rho^\Gamma_{AB})\leqslant
\frac1{256},~\forall \rho_{AB}\in
\density{\complex^2\ot\complex^2}.
\end{eqnarray}

\section{The derivation process of Eqs. \eqref{eq:MtoP} and \eqref{eq:mtoP}}
\label{sec:MmtoP}

Define
$\Phi(x,y,z):=\det(\rho^\Gamma_{AB})=xyz(1-x-y-z)$, where $x,y,z$
are the largest three eigenvalues of $\rho^\Gamma_{AB}$. For any
fixed pair $(p_2,p_3)\in\cA$, the three variables necessarily meet
all the constraints given by Eqs.~\eqref{eq:maincons-1} ---
\eqref{eq:maincons-4}. Recall the notation $\cV(p_2,p_3)$ introduced
below those constraints in Sect.~\ref{sec:pre}. Then we have
$(x,y,z)\in\cV(p_2,p_3)$. Now our optimization problem is formulated
as
\begin{eqnarray} \label{eq:detformula} \min/\max
~~\{\Phi(x,y,z):~(x,y,z)\in\cV(p_2,p_3)\}.
\end{eqnarray}
To figure out this optimization problem, let $s:=x+y+z$ be fixed.
Using the notations defined by Eqs.~\eqref{eq:smax} ---
\eqref{eq:smin}, we obtain that
$s\in[s_{\min},s_{\max}]\subseteq[\frac34,\frac32]$. Then the
objective function to be optimized becomes
$\Phi(x,y,z;s):=(1-s)xyz$. The problem of optimizing $\Phi(x,y,z)$
is equivalent to optimizing the objective function $\Phi(x,y,z;s)$
with respect to $s\in[s_{\min},s_{\max}]$ after the optimization of
$\Phi(x,y,z;s)$ with the triple of variables $(x,y,z)$ subject to
the following constraints:
\begin{eqnarray} \label{eq:cons-0-s2}
&&x+y+z=s,\\
\label{eq:cons-1-s2}
&&1\geqslant x\geqslant y\geqslant z\geqslant 0,\\
\label{eq:cons-2-s2}
&&z\geqslant 1-s \geqslant -\frac12,\\
\label{eq:cons-3-s2}
&&x^2+y^2+z^2=p_2-(1-s)^2,\\
\label{eq:cons-4-s2}
&&x^3+y^3+z^3=p_3-(1-s)^3.
\end{eqnarray}

One can verify the following observation:
\begin{eqnarray}
\label{eq:xyz-s}
xyz = \frac1{3!}\Abs{\begin{array}{ccc}
                        s & 1 & 0 \\
                        p_2-(1-s)^2 & s & 2 \\
                        p_3-(1-s)^3 & p_2-(1-s)^2 & s
                      \end{array}
}.
\end{eqnarray}
Based on the above observation, we obtain that
\begin{equation}
\label{eq:phi-p2-p3}
\begin{aligned}
\Phi(x,y,z;s) &= (1-s)xyz=(1-s)\Pa{s^3-2s^2+\frac{3-p_2}2s+\frac{p_3-1}3}\\
&=-\Pa{s^4-3s^3+\frac{7-p_2}2s^2+\frac{3p_2+2p_3-11}6s+\frac{1-p_3}3} \\
&\equiv \Phi(s|p_2,p_3).
\end{aligned}
\end{equation}
Here, we use $\Phi(s|p_2,p_3)$ to represent the second line of Eq.
\eqref{eq:phi-p2-p3}. For convenience, let $P(s|p_2,p_3)=-\Phi(s|p_2,p_3)$, and then the target problem is equivalently transformed to optimizing $P(s|p_2,p_3)$ over the closed interval
$[s_{\min},s_{\max}](\subseteq[\tfrac34,\frac32])$ for any given pair $(p_2,p_3)\in\cA$.

\section{The optimization process of $P(s|p_2,p_3)$}
\label{sec:optP}

We calculate the first and second derivatives
of $P(s|p_2,p_3)$ with respect to $s$ as
\begin{eqnarray}
\frac{\dif}{\dif s}P(s|p_2,p_3) &=& 4s^3 - 9s^2 + (7-p_2)s + \frac{3p_2+2p_3-11}6,\label{eq:Pder-1}\\
\frac{\dif^2}{\dif s^2}P(s|p_2,p_3) &=& 12s^2 -
18s+(7-p_2).\label{eq:Pder-2}
\end{eqnarray}
Consider the second derivative $\frac{\dif^2}{\dif s^2}P(s|p_2,p_3)$
which is a quadratic function with respect to $s$. By calculation
the discriminant of the quadratic equation $\frac{\dif^2}{\dif
s^2}P(s|p_2,p_3)=0$ is $48(p_2-\tfrac14)$. Since
$p_2\in[\tfrac14,1]$, the discriminant is always nonnegative. Thus,
the two roots of the equation are
\begin{eqnarray}
s_{\pm}=\frac{9\pm\sqrt{3(4p_2-1)}}{12}.
\end{eqnarray}
One can verify that $s_-\in[\tfrac12, \tfrac34]$ and
$s_+\in[\tfrac34, 1]$ for $p_2\in[\tfrac14,1]$. Due to $s\in
[s_{\min}, s_{\max}]\subseteq[\tfrac34,\frac32]$, we conclude that
$\frac{\dif^2}{\dif s^2}P(s|p_2,p_3)\geqslant0$ when $s\in[s_+,1]$;
and $\frac{\dif^2}{\dif s^2}P(s|p_2,p_3)\leqslant0$ when $s\in
[\tfrac34,s_+]$. Next, we consider the distribution of all roots of
the cubic equation $\frac14\frac{\dif}{\dif s}P(s|p_2,p_3)=0$ with
respect to $s$, namely
\begin{eqnarray}
\label{eq:firstprime} s^3 - \frac94s^2 + \frac{7-p_2}4s +
\frac{3p_2+2p_3-11}{24}=0.
\end{eqnarray}
Taking $s=t+\frac34$ in Eq. \eqref{eq:firstprime} we obtain that
\begin{eqnarray}\label{eq:cubic-1}
t^3 + p t + q =0,
\end{eqnarray}
where
\begin{eqnarray}\label{eq:pvsq}
p = \frac{1-4p_2}{16},\quad q = \frac{1-6p_2+8p_3}{96}.
\end{eqnarray}
Its discriminant is given by $D(p_2,p_3) =
-108\cdot\Delta(p_2,p_3)$, where
\begin{eqnarray}
\Delta(p_2,p_3) &=&\frac{q^2}4+\frac{p^3}{27}= \frac1{576}\Pa{p^2_3
+
\frac{1-6p_2}4p_3 + \frac{1-12p_2+39p^2_2-16p^3_2}{48}}\\
&=&\frac1{576}\Pa{p_3-f^+(p_2)}\Pa{p_3-f^-(p_2)},
\end{eqnarray}
where $f^{\pm}(p_2)$ are formulated in Eq. \eqref{eq:p2-p3-f}. It
follows from Eq. \eqref{eq:p2-p3} that $p_3$ is bounded by
$f^{\pm}(p_2)$ for any two-qubit state. That is, $f^-(p_2)\leqslant
p_3\leqslant f^+(p_2)$ for any fixed pair $(p_2,p_3)\in \cA$. Hence,
we conclude that $\Delta(p_2,p_3)=0$ if $p_3=f^+(p_2)$ or
$p_3=f^-(p_2)$, otherwise $\Delta(p_2,p_3)$ is always negative. We
also plot the surface graph in the $(p_2,p_3,\Delta)$-coordinate
system, see Fig.~\ref{fig:delta}, to perceive the sign of
$\Delta(p_2,p_3)$ more intuitively.
\begin{figure}[ht]\centering
{\begin{minipage}[b]{1\linewidth}
\includegraphics[width=1\textwidth]{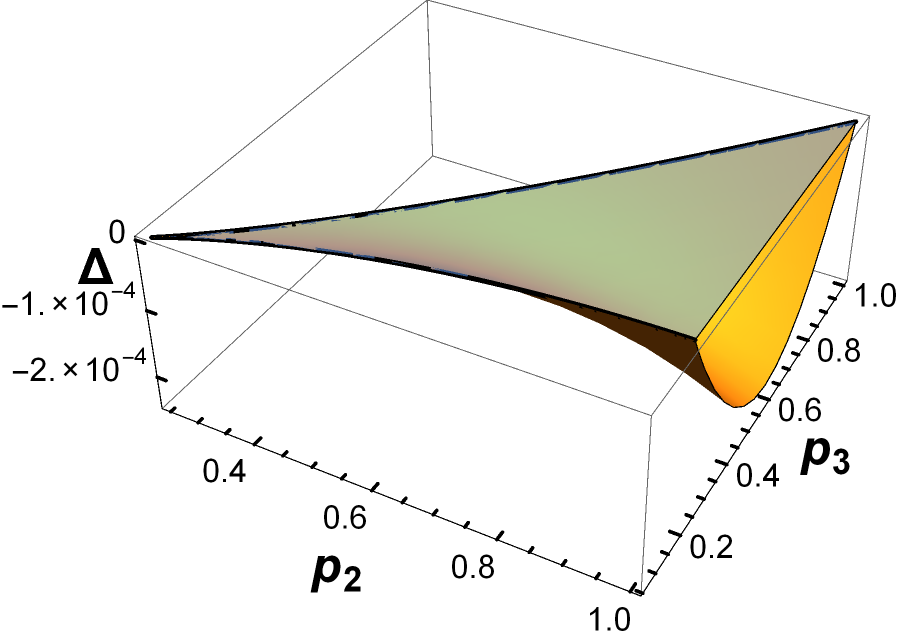}
\end{minipage}}
\caption{(Color Online) The surface corresponding to
$\Delta(p_2,p_3)$ supported on the 2D region $\cA$. The surface below the plane of $\Delta(p_2,p_3)=0$ except the two boundary curves $p_3=f^\pm(p_2)$. The two curves $p_3=f^\pm(p_2)$ both lie in the plane of $\Delta(p_2,p_3)=0$.
}
\label{fig:delta}
\end{figure}
Based on the sign of $\Delta(p_2,p_3)$, we shall investigate the
distribution of all roots of the equation $\frac14\frac{\dif}{\dif
s}P(s|p_2,p_3)=0$ in two cases. The first case is when
$\Delta(p_2,p_3)=0$, and the second is when $\Delta(p_2,p_3)<0$.
Note that $(p_2,p_3)\in\cA$, then $p_2=\frac14$ implies that
$p_3=\frac1{16}$. Then it must be true that $p=0$ and $q=0$ at the
same time. In a word, beyond the boundary of $\cA$, we see that
\begin{eqnarray}
p<0\quad\text{and}\quad \Delta<0.
\end{eqnarray}
In addition, $q$ vanished only when the point $(p_2,p_3)$ lies on
the line $p_3=f(p_2)=\frac{6p_2-1}8$.

\begin{enumerate}[(1)]
\item If $\Delta(p_2,p_3)=0$, then $p_3=f^\pm(p_2)$ for $p_2\in[\tfrac14,1]$.
According to Appendix~\ref{app:cubicequation}, the three roots of
the equation $\frac14\frac{\dif}{\dif s}P(s|p_2,p_3)=0$ are all
real. We further consider the following two subcases.
\begin{itemize}
\item[(1a)] If $p_3=f^-(p_2)$, then $q<0$ since $f(p_2)>f^-(p_2)$, the three
roots of the equation $\frac14\frac{\dif}{\dif s}P(s|p_2,p_3)=0$
comprise of one simple root and double root, respectively, which are
given by the formulae:
\begin{eqnarray}
\ell_+:=\frac34+\frac{3q}{p} = \frac{9+2\sqrt{3(4p_2-1)}}{12}\text{
and }\ell_-:=\frac34-\frac{3q}{2p} = \frac{9-\sqrt{3(4p_2-1)}}{12}.
\end{eqnarray}
It implies that
\begin{eqnarray}
\frac14\frac{\dif}{\dif s}P(s|p_2,p_3)=
\Pa{s-\ell_+}\Pa{s-\ell_-}^2,
\end{eqnarray}
and all the three real roots are $s_1=\ell_+$ and $s_2=s_3=\ell_-$.
One can verify only the root
$s_1=\frac{9+2\sqrt{3(4p_2-1)}}{12}\in[\tfrac34,\frac32]$. Hence, we
only need to consider the stationary point with $s=\ell_+$. If
$\ell_+\in[s_{\min},s_{\max}]$, then
\begin{eqnarray*}
\min_{s\in[s_{\min},s_{\max}]}P(s|p_2,p_3) =
\min\Set{P(\ell_+|p_2,f^-(p_2)),P(s_{\max}|p_2,f^-(p_2)),P(s_{\min}|p_2,f^-(p_2))},\\
\max_{s\in[s_{\min},s_{\max}]}P(s|p_2,p_3) =
\max\Set{P(\ell_+|p_2,f^-(p_2)),P(s_{\max}|p_2,f^-(p_2)),P(s_{\min}|p_2,f^-(p_2))}.
\end{eqnarray*}
Otherwise, if $\ell_+\notin[s_{\min},s_{\max}]$, then
\begin{eqnarray*}
\min_{s\in[s_{\min},s_{\max}]}P(s|p_2,p_3) =
\min\Set{P(s_{\max}|p_2,f^-(p_2)),P(s_{\min}|p_2,f^-(p_2))},\\
\max_{s\in[s_{\min},s_{\max}]}P(s|p_2,p_3) =
\max\Set{P(s_{\max}|p_2,f^-(p_2)),P(s_{\min}|p_2,f^-(p_2))}.
\end{eqnarray*}
\item[(1b)] If $p_3=f^+(p_2)$, then $q>0$ since $f^+(p_2)>f(p_2)$, the three
roots of the equation $\frac14\frac{\dif}{\dif s}P(s|p_2,p_3)=0$
comprise of one simple root and double root, respectively, which are
given by the formulae:
\begin{eqnarray}
L_-:=\frac34+\frac{3q}{p} = \frac{9-2\sqrt{3(4p_2-1)}}{12}\text{ and
}L_+:=\frac34-\frac{3q}{2p} = \frac{9+\sqrt{3(4p_2-1)}}{12}.
\end{eqnarray}
We see that
\begin{eqnarray}
\frac14\frac{\dif}{\dif s}P(s|p_2,p_3)= \Pa{s-L_-}\Pa{s-L_+}^2,
\end{eqnarray}
are the three roots are $s_1=s_2=L_+$ and $s_3=L_-$. One can verify
that only the double roots $s_1=s_2=\frac{9+\sqrt{3(4p_2-1)}}{12}\in
[\tfrac34,1]$. Hence, we only need to consider the stationary point
with $s=L_+$. If $L_+\in[s_{\min},s_{\max}]$, then
\begin{eqnarray*}
\min_{s\in[s_{\min},s_{\max}]}P(s|p_2,p_3) =
\min\Set{P(L_+|p_2,f^+(p_2)),P(s_{\max}|p_2,f^+(p_2)),P(s_{\min}|p_2,f^+(p_2))},\\
\max_{s\in[s_{\min},s_{\max}]}P(s|p_2,p_3) =
\max\Set{P(L_+|p_2,f^+(p_2)),P(s_{\max}|p_2,f^+(p_2)),P(s_{\min}|p_2,f^+(p_2))}.
\end{eqnarray*}
If $L_+\notin[s_{\min},s_{\max}]$, then
\begin{eqnarray*}
\min_{s\in[s_{\min},s_{\max}]}P(s|p_2,p_3) =
\min\Set{P(s_{\max}|p_2,f^+(p_2)),P(s_{\min}|p_2,f^+(p_2))},\\
\max_{s\in[s_{\min},s_{\max}]}P(s|p_2,p_3) =
\max\Set{P(s_{\max}|p_2,f^+(p_2)),P(s_{\min}|p_2,f^+(p_2))}.
\end{eqnarray*}
\end{itemize}
\item If $\Delta(p_2,p_3)<0$, then $f^-(p_2)<p_3<f^+(p_2)$ for $\frac 14<p_2\leqslant 1$. According to Appendix~\ref{app:cubicequation}, the cubic equation formulated in Eq. \eqref{eq:cubic-1}, with coefficients given by Eq. \eqref{eq:pvsq},
has three real roots given by
\begin{eqnarray}
\label{eq:Delta<0roots-1}
\begin{cases}
t_1 = \frac{\sqrt{3(4p_2-1)}}6\cos\theta\\
t_2 = \frac{\sqrt{3(4p_2-1)}}6\cos\Pa{\theta-\frac{2\pi}3}\\
t_3 = \frac{\sqrt{3(4p_2-1)}}6\cos\Pa{\theta+\frac{2\pi}3}
\end{cases},
\end{eqnarray}
where
\begin{eqnarray}
\label{eq:Delta<0roots-2}
\theta\equiv\theta(p_2,p_3)=\frac13\arccos\Pa{\frac{\sqrt{3}(6p_2-8p_3-1)}{(4p_2-1)^\frac32}}.
\end{eqnarray}
Recall that $\tfrac14\frac{\dif}{\dif s}P(s|p_2,p_3)=0$ can be
transformed to Eq.~\eqref{eq:cubic-1} by taking $s=t+\frac 34$. It
follows that $\tfrac14\frac{\dif}{\dif s}P(s|p_2,p_3)=0$ has three
real roots given by
\begin{eqnarray}
\label{eq:Delta<0roots-3}
\begin{cases}
s_1=r_1(p_2,p_3):=\frac34+\frac{\sqrt{3(4p_2-1)}}6\cos\theta\\
s_2=r_2(p_2,p_3):=\frac34+\frac{\sqrt{3(4p_2-1)}}6\cos\Pa{\theta-\frac{2\pi}3}\\
s_3=r_3(p_2,p_3):=\frac34+\frac{\sqrt{3(4p_2-1)}}6\cos\Pa{\theta+\frac{2\pi}3}
\end{cases}.
\end{eqnarray}
The local extreme values can be attained only at the stationary
points and boundary points. In order to roughly determine the values
of these roots for a fixed pair $(p_2,p_3)\in\cA$, we plot the 3D
surfaces of the above three roots $s_i=r_i(p_2,p_3)$, where
$i=1,2,3$, in terms of $p_2$ and $p_3$, see Fig. \ref{fig:3roots}.
\begin{figure}[ht]\centering
{\begin{minipage}[b]{1\linewidth}
\includegraphics[width=1\textwidth]{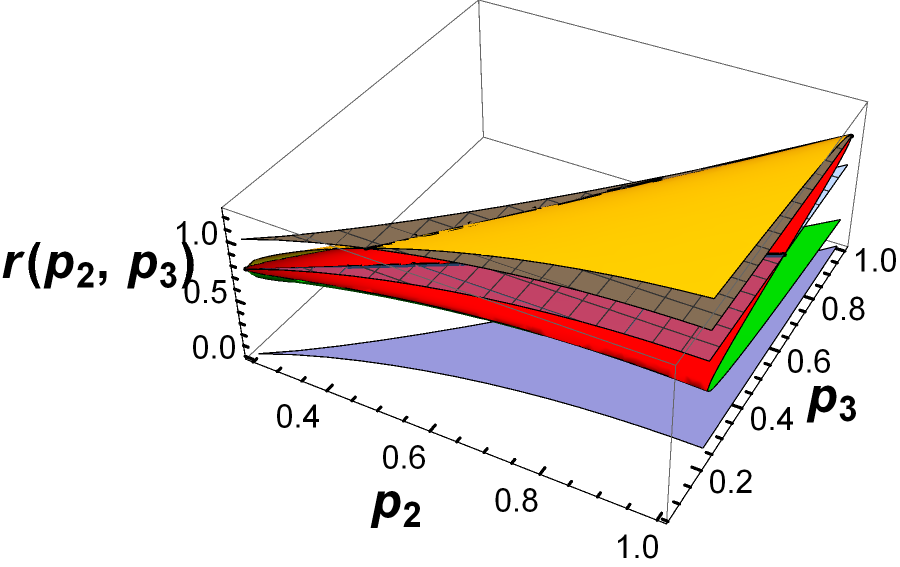}
\end{minipage}}
\caption{(Color Online) The three distinct real roots $r(p_2,p_3)$
of the cubic equation $\frac14\frac{\dif}{\dif s}P(s|p_2,p_3)=0$ for
any pair $(p_2,p_3)\in\cA$. The root $r(p_2,p_3)=1$ stands for the line
$p_3=\frac{3p_2-1}2$ with $p_2\in[\frac7{16},1]$, and the root
$r(p_2,p_3)=\frac34$ stands for the line $p_3=\frac{6p_2-1}8$ with
$p_2\in[\frac14,1]$.}\label{fig:3roots}
\end{figure}
\end{enumerate}

Combining the two cases investigated above, we find that for any
pair $(p_2,p_3)\in\cA$, the three roots of $\tfrac14\frac{\dif}{\dif
s}P(s|p_2,p_3)=0$ in the descending order, i.e. $r_1\geqslant
r_2\geqslant r_3$, satisfy that
\begin{eqnarray} \frac14\leqslant
r_3\leqslant\frac34\leqslant r_1\leqslant\frac54\quad\text{and}\quad
\frac12\leqslant r_2\leqslant1.
\end{eqnarray}
In particular, according to the first case when $p_3=f^{\pm}(p_2)$, we determine that $r_1=r_2\in[\frac34,1]$ on the upper boundary curve of $\cA$, i.e. $p_3=f^+(p_2)$, and $r_2=r_3\in[\frac12,\frac34]$ on the lower boundary curve of $\cA$, i.e. $p_3=f^-(p_2)$.
In addition, we find that $r_1(p_2,p_3)=1$ determines a curve $p_3=\frac{3p_2-1}2$ in the 2D region of $\cA$,
where $p_2\in[\frac7{16},1]$, and
$r_2(p_2,p_3)=\frac34$ determines a curve $p_3=\frac{6p_2-1}8$,
where $p_2\in[\frac14,1]$.

\section{Cardano formula}\label{app:cubicequation}

The Cardano formula is named after G. Cardano, who was the first to
publish it in 1545. This formula is one for finding the roots of the
general cubic equation over $\complex$. In our paper, we are
interested in those cubic equations whose all roots are real. For
such class of cubic equations to be used in optimization, Bauschke,
Lal, and Wang give a thorough consideration in their recent paper
\cite{Bauschke2023}:

\begin{thrm}
Let $f(x):=x^3+px+q$, where both $p$ and $q$ are in $\real$. Then
$0$ is the only inflection point of $f$: $f$ is strictly concave on
$(-\infty,0)$ and $f$ is strictly convex on $(0,+\infty)$. Moreover,
exactly one of the following occurs:
\begin{enumerate}[(i)]
\item \blue{$p<0$:} Set $x_\pm:=\sqrt{-\frac p3}$. Then $x_-<x_+$,
where $x_\pm$ are two distinct simple roots of the equation
$f'(x)=0$,
\begin{itemize}
\item $f$ is strictly increasing on $(-\infty,x_-]$;
\item $f$ is strictly decreasing on $[x_-,x_+]$;
\item $f$ is strictly increasing on $[x_+,+\infty)$.
\end{itemize}
Moreover,
\begin{eqnarray}
f(x_+)f(x_-)=4\Delta,\quad \Delta:=\Pa{\frac p3}^3+\Pa{\frac q2}^2,
\end{eqnarray}
and this case trifurcates further as follows:
\begin{enumerate}[(a)]
\item \blue{$\Delta>0$:} Then the equation $f(x)=0$ has \textbf{exactly one real root} $r$. It
is simple and given by
\begin{eqnarray}
r:=u_-+u_+,\quad u_\pm:=\Pa{-\frac q2\pm\sqrt{\Delta}}^{\frac13}.
\end{eqnarray}
The remaining \textbf{two simple nonreal roots} are given by
\begin{eqnarray}
\frac{-(u_-+u_+)\pm\sqrt{-3}(u_--u_+)}2.
\end{eqnarray}
\item \blue{$\Delta=0$:} If $q>0$ ($q<0$), then $2x_-$ ($2x_+$) is a
\textbf{simple real root} while $x_+$ ($x_-$) is a \textbf{double
real root}. Moreover, these cases can be combined into
\begin{itemize}
\item $3\frac qp=2\Pa{-\frac q2}^\frac13$ is a \textbf{simple root} of the equation
$f(x)=0$, and
\item $-\frac{3q}{2p}=-\Pa{-\frac q2}^\frac13$ is a \textbf{double root} of the equation
$f(x)=0$.
\end{itemize}
\item \blue{$\Delta<0$:} Then the equation $f(x)=0$ has \textbf{three simple real roots}:
$r_-,r_0,r_+$, where $r_-<z_-<r_0<z_+<r_+$. Indeed, set
\begin{eqnarray}
\theta:=\arccos\Br{\Pa{-\frac q2}\Pa{-\frac p3}^{-\frac32}},
\end{eqnarray}
which lies in $(0,\pi)$, and then define $x_k(k=0,1,2)$ by
\begin{eqnarray}
x_k:=2\sqrt{-\frac p3}\cos\Pa{\frac{\theta+2k\pi}3}.
\end{eqnarray}
Then $r_-=x_1, r_0=x_2$, and $r_+=x_0$.
\end{enumerate}
\item \blue{$p=0$:} Then the equation $f'(x)=0$ has a \textbf{double root} at
$0$, and $f$ is strictly increasing on $\real$. The only real root
is given by
\begin{eqnarray}
r:=\sqrt[3]{-q}.
\end{eqnarray}
If $q=0$, then $r$ is a triple root. If $q\neq0$, then $r$ is a
simple root and the remaining nonreal simple roots are given by
$\frac{-1\pm\sqrt{-3}}2r$.
\item \blue{$p>0$:} Then the  equation $f'(x)=0$ has no real root,
$f$ is strictly increasing on $\real$, and $f$ has \textbf{exactly
one real root} $r$. It is simple and given by
\begin{eqnarray}
r:=u_-+u_+
\end{eqnarray}
where $u_\pm:=\Pa{-\frac q2\pm\sqrt{\Delta}}^{\frac13}$ and
$\Delta:=\Pa{\frac p3}^3+\Pa{\frac q2}^2$.
\end{enumerate}
Once again, the remaining \textbf{two simple nonreal roots} are
\begin{eqnarray}
r:=u_-+u_+,\quad u_\pm:=\Pa{-\frac q2\pm\sqrt{\Delta}}^{\frac13}.
\end{eqnarray}
\end{thrm}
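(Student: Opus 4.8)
The plan is to work entirely from the elementary calculus of the depressed cubic $f(x)=x^3+px+q$ and then to extract the explicit root formulae case by case. I would begin with the two derivatives $f'(x)=3x^2+p$ and $f''(x)=6x$. Since $f''$ vanishes only at $x=0$ and changes sign there, $x=0$ is the unique inflection point, with $f$ strictly concave on $(-\infty,0)$ and strictly convex on $(0,+\infty)$; this settles the first assertion at once. The trichotomy on the sign of $p$ is dictated by $f'$: when $p>0$ we have $f'>0$ everywhere, when $p=0$ we have $f'(x)=3x^2\geqslant0$ with a double zero at the origin, and when $p<0$ the equation $f'(x)=0$ has the two simple roots $x_\pm=\pm\sqrt{-p/3}$. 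In the last case $f'$ factors as $3(x-x_-)(x-x_+)$, so $f'>0$ outside $[x_-,x_+]$ and $f'<0$ inside, which gives the claimed increasing/decreasing/increasing pattern and identifies $x_-$ as a local maximizer and $x_+$ as a local minimizer.

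The next key step, valid in the regime $p<0$, is the identity $f(x_+)f(x_-)=4\Delta$. I would prove it by a short direct computation: since $x_\pm^2=-p/3$, one has $x_\pm^3=-\tfrac{p}{3}x_\pm$, hence $f(x_\pm)=\tfrac{2p}{3}x_\pm+q$; multiplying and using $x_++x_-=0$ and $x_+x_-=p/3$ yields $f(x_+)f(x_-)=q^2+\tfrac{4p^3}{27}=4\Delta$. With this identity in hand the connection between $\Delta$ and the number of real roots is purely a matter of reading the graph of $f$: the local maximum value $f(x_-)$ and the local minimum value $f(x_+)$ have product $4\Delta$, so $\Delta<0$ forces $f(x_-)>0>f(x_+)$ and the intermediate value theorem together with the monotonicity pattern produces three simple real roots (one in each of the three monotone pieces); $\Delta=0$ forces exactly one of $f(x_\pm)$ to vanish, producing a double root at the corresponding critical point and a simple root in the remaining monotone branch; and $\Delta>0$ forces $f(x_-)$ and $f(x_+)$ to have the same sign, leaving a single simple real root. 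When $p\geqslant0$ the function is strictly increasing, so there is exactly one real root, and since $\Delta>0$ automatically when $p>0$, the analysis merges with the $\Delta>0$ subcase.

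It remains to produce the explicit formulae. For the one-real-root situations ($p>0$, or $p<0$ with $\Delta>0$) I would use Cardano's substitution $x=u+v$ subject to $uv=-p/3$; expanding $x^3$ and matching coefficients shows that $u^3$ and $v^3$ are the two roots $-\tfrac q2\pm\sqrt{\Delta}$ of the resolvent quadratic $t^2+qt-\tfrac{p^3}{27}=0$, whence $r=u_-+u_+$ with $u_\pm=(-\tfrac q2\pm\sqrt{\Delta})^{1/3}$, and the two nonreal roots are obtained by multiplying $u_\pm$ by the primitive cube roots of unity. For the three-real-root case ($p<0$, $\Delta<0$) I would instead substitute $x=2\sqrt{-p/3}\,\cos\phi$ and invoke the triple-angle identity $\cos3\phi=4\cos^3\phi-3\cos\phi$, which collapses $x^3+px+q=0$ to $\cos3\phi=(-\tfrac q2)(-\tfrac p3)^{-3/2}$; choosing the three admissible arguments $\phi=(\theta+2k\pi)/3$ with $\theta$ as in the statement gives the three $x_k$. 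The degenerate case $\Delta=0$ follows either as the coalescing limit of the Cardano formula or by directly setting $u=v=(-q/2)^{1/3}$, and the identities expressing the simple and double roots through $3q/p$ and $-3q/(2p)$ are then verified using the relation $\tfrac{p^3}{27}=-\tfrac{q^2}{4}$ that defines $\Delta=0$.

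The step I expect to be most delicate is not any single computation but the bookkeeping of orderings and branch choices: verifying that the trigonometric roots $x_0,x_1,x_2$ land in the claimed order $r_-=x_1<z_-<r_0=x_2<z_+<r_+=x_0$ relative to the critical points $z_\pm=x_\pm$, and confirming in the $\Delta=0$ case that it is precisely $2x_-$ (resp.\ $2x_+$) that is the simple root when $q>0$ (resp.\ $q<0$). These require tracking the sign conventions in $\arccos$ and in the real cube root together with the established monotonicity picture, and this is where a careless argument would most easily go wrong.
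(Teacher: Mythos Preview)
Your proposal is correct and follows the standard route to Cardano's formula: second derivative for the inflection point, first derivative for the monotonicity pattern and critical points, the product identity $f(x_+)f(x_-)=4\Delta$ to control the number of real roots, then Cardano's substitution $x=u+v$ with $uv=-p/3$ for the one-real-root case and the trigonometric substitution $x=2\sqrt{-p/3}\cos\phi$ for the three-real-root case.

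There is no comparison to be made with the paper's proof, however, because the paper does not prove this theorem at all: it is quoted verbatim from Bauschke, Lal, and Wang \cite{Bauschke2023} and stated in Appendix~\ref{app:cubicequation} purely as background for the root analysis of $\tfrac14\tfrac{\dif}{\dif s}P(s|p_2,p_3)=0$ carried out in Appendix~\ref{sec:optP}. Your concern about the bookkeeping of orderings (that $x_1<x_2<x_0$ and that the correct critical point becomes the double root according to the sign of $q$) is well placed, and those checks are indeed the only places where a careless argument could slip, but they are routine once the range $\theta\in(0,\pi)$ is fixed.
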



\begin{prop}\label{prop:diz}
Given a cubic equation with real coefficients:
\begin{eqnarray}\label{eq:3poly}
x^3+a_2x^2+a_1x+a_0=0\quad(a_0\neq0),
\end{eqnarray}
it follows that Eq.~\eqref{eq:3poly} must have at least one real
root. In particular,
\begin{enumerate}[(i)]
\item Let $\gamma$ is a real root of
Eq.~\eqref{eq:3poly}. Then we have
\begin{eqnarray}
\frac{x^3+a_2x^2+a_1x+a_0}{x-\gamma} =
x^2+(a_2+\gamma)x-\frac{a_0}{\gamma}.
\end{eqnarray}
Thus the last two roots are given by
\begin{eqnarray}
\frac{-(a_2+\gamma)\gamma\pm\sqrt{(a_2\gamma +\gamma^2)^2+4
a_0\gamma}}{2\gamma}.
\end{eqnarray}
\item Let $\alpha,\beta$ be two real roots of
Eq.~\eqref{eq:3poly}. Then all roots of Eq.~\eqref{eq:3poly} must be
real, and we have
\begin{eqnarray}
\frac{x^3+a_2x^2+a_1x+a_0}{(x-\alpha)(x-\beta)} = x+a_2+\alpha
+\beta.
\end{eqnarray}
Thus the last real root must be $-(a_2+\alpha+\beta)$.
\end{enumerate}
\end{prop}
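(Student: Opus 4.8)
The plan is to argue by elementary polynomial algebra, using only that a real polynomial has its non-real roots in complex-conjugate pairs. First I would dispose of the existence claim: a cubic has three roots counted with multiplicity, and since the non-real ones among them occur in conjugate pairs, their number is even, hence at most $2$, so at least one root of Eq.~\eqref{eq:3poly} is real. (Alternatively, continuity of $f(x):=x^3+a_2x^2+a_1x+a_0$ together with $f(x)\to+\infty$ as $x\to+\infty$ and $f(x)\to-\infty$ as $x\to-\infty$ gives a real zero by the intermediate value theorem.)

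For part (i): given a real root $\gamma$, I would first observe that $\gamma\neq0$, since $\gamma=0$ would force $a_0=f(0)=0$, contradicting $a_0\neq0$. As $(x-\gamma)$ divides $f(x)$, write $f(x)=(x-\gamma)(x^2+bx+c)$, expand, and match the coefficients of $x^2$ and of $x^0$ to get $b=a_2+\gamma$ and $c=-a_0/\gamma$; this is precisely the asserted quotient $x^2+(a_2+\gamma)x-a_0/\gamma$. The remaining two roots are then the roots of this quadratic; applying the quadratic formula and clearing the denominator $\gamma$ under the radical, via $\gamma^2\big[(a_2+\gamma)^2+4a_0/\gamma\big]=(a_2\gamma+\gamma^2)^2+4a_0\gamma$, yields the stated expression $\frac{-(a_2+\gamma)\gamma\pm\sqrt{(a_2\gamma+\gamma^2)^2+4a_0\gamma}}{2\gamma}$. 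The one point needing care is that $\gamma\sqrt{\,\cdot\,}=\pm\sqrt{\gamma^2(\cdot)}$ according to $\sign\gamma$, but this sign is harmless since the two roots already carry the $\pm$.

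For part (ii): if $\alpha,\beta$ are two real roots — understood with multiplicity, so that $(x-\alpha)(x-\beta)$ divides $f(x)$, the case $\alpha=\beta$ being that of a double root — then the quotient $f(x)/\big[(x-\alpha)(x-\beta)\big]$ is a monic linear polynomial with real coefficients, say $x-\delta$ with $\delta\in\real$; in particular all three roots of $f$ are real. Comparing the coefficient of $x^2$ in $(x-\alpha)(x-\beta)(x-\delta)$ with $a_2$ gives $\alpha+\beta+\delta=-a_2$, so $\delta=-(a_2+\alpha+\beta)$ and the quotient equals $x+a_2+\alpha+\beta$, as claimed.

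No genuine obstacle arises: the proposition is elementary and follows entirely from the division algorithm and Vieta-type coefficient matching. The only spots demanding a moment's attention are the $\sign\gamma$ bookkeeping in the radical rewriting of part (i), and, in part (ii), reading ``two real roots'' so as to guarantee the divisibility $(x-\alpha)(x-\beta)\mid f(x)$ in the repeated-root case.
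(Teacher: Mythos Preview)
Your proposal is correct and follows essentially the same elementary route as the paper: both arguments reduce to polynomial division and coefficient matching (the paper does the long division and partial-fraction decomposition explicitly, while you match coefficients via Vieta, but the content is identical). Your handling of the repeated-root case in part (ii) is actually slightly more careful than the paper's partial-fraction computation, which tacitly assumes $\alpha\neq\beta$.
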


\begin{proof}
(i) Note that
\begin{eqnarray*}
\frac{x^3+a_2x^2+a_1x+a_0}{x-\gamma} &=&
x^2+(a_2+\gamma)x+(\gamma^2+a_2\gamma+a_1)+\frac{\gamma^3+a_2\gamma^2+a_1\gamma+a_0}{x-\gamma}\\
&=&x^2+(a_2+\gamma)x+(\gamma^2+a_2\gamma+a_1),
\end{eqnarray*}
where we used the fact that $\gamma$ is the real root of
Eq.~\ref{eq:3poly}, i.e., $\gamma^3+a_2\gamma^2+a_1\gamma+a_0=0$.
Then $\gamma^2+a_2\gamma+a_1=-\frac{a_0}\gamma$ follows from the
fact that $\gamma\neq0$ due to $a_0\neq0$. Therefore, we get that
\begin{eqnarray*}
\frac{x^3+a_2x^2+a_1x+a_0}{x-\gamma} =
x^2+(a_2+\gamma)x-\frac{a_0}{\gamma}.
\end{eqnarray*}
(ii) The proof goes similarly as to (i). Indeed,
\begin{eqnarray*}
\frac{x^3+a_2x^2+a_1x+a_0}{(x-\alpha)(x-\beta)} &=& (x+a_2+\alpha
+\beta)+\frac{\alpha^3+a_2\alpha^2+a_1\alpha+a_0}{(\alpha-\beta)(x-\alpha)}+\frac{\beta^3+a_2\beta^2+a_1\beta+a_0}{(\beta-\alpha)(x-\beta)}\\
&=&x+a_2+\alpha +\beta,
\end{eqnarray*}
which finishes the proof.
\end{proof}

\end{document}